\newcommand{\MC}{M_\times}
\newcommand{\MOPT}{M^*}
\newcommand{\bt}{\lambda}
\newcommand{\btopt}{\lambda^*}
\newcommand{\cw}{cw}
\newcommand{\require}{\textbf{Input: }}
\newcommand{\ensure}{\textbf{Output: }}
\newtheorem{lemma}{Lemma}
\newtheorem{corollary}{Corollary}
\newtheorem{theorem}{Theorem}
\title{Approximating the Bottleneck Plane Perfect Matching\\of a Point Set}
\author{
A. Karim Abu-Affash \thanks{Software Engineering Department, Shamoon College of Engineering, Beer-Sheva 84100, Israel, {\tt abuaa1@sce.ac.il.}} 
\and 
Ahmad Biniaz
\thanks{School of Computer Science, Carleton University, 
                    Ottawa, Canada. Research supported by NSERC.}
\and 
Paz Carmi\thanks{Department of Computer Science, Ben-Gurion University of the Negev, Beer-Sheva 84105, Israel, {\tt carmip@cs.bgu.ac.il.}}
\and
Anil Maheshwari\footnotemark[2]
\and 
Michiel Smid\footnotemark[2]
}
\date{\today}
\begin{document}

\maketitle

\begin{abstract}
A bottleneck plane perfect matching of a set of $n$ points in $\mathbb{R}^2$ is defined to be a perfect non-crossing matching that minimizes the length of the longest edge; the length of this longest edge is known as {\em bottleneck}. The problem of computing a bottleneck plane perfect matching has been proved to be NP-hard. 
We present an algorithm that computes a bottleneck plane matching of size at least $\frac{n}{5}$ in $O(n \log^2 n)$-time.
Then we extend our idea toward an $O(n\log n)$-time approximation algorithm which computes a plane matching of size at least $\frac{2n}{5}$ whose edges have length at most $\sqrt{2}+\sqrt{3}$ times the bottleneck.

{\bf Key words:} plane matching, bottleneck matching, geometric graph, unit disk graph, approximation algorithm.

\end{abstract}

\section{Introduction}
We study the problem of computing a bottleneck non-crossing matching of points in the plane.
For a given set $P$ of $n$ points in the plane, where $n$ is even, let $K(P)$ denote the complete Euclidean graph with vertex set $P$. The {\em bottleneck plane matching} problem is to find a perfect non-crossing matching of $K(P)$ that minimizes the length of the longest edge. We denote such a matching by $\MOPT$. The bottleneck, $\btopt$, is the length of the longest edge in $\MOPT$. The problem of computing $\MOPT$ 
%which has applications in VLSI layout design and operations research, 
has been proved to be NP-hard \cite{Abu-Affash2014}. 
Figure 1 in \cite{Abu-Affash2014} and \cite{Carlsson2010} shows that the longest edge in the minimum weight matching (which is planar) can be unbounded with respect to $\btopt$. On the other hand the weight of the bottleneck matching can be unbounded with respect to the weight of the minimum weight matching, see Figure \ref{weight}.

\begin{figure}[ht]
  \centering
\setlength{\tabcolsep}{0in}
  $\begin{tabular}{cc}
 \multicolumn{1}{m{.5\columnwidth}}{\centering\includegraphics[width=.3\columnwidth]{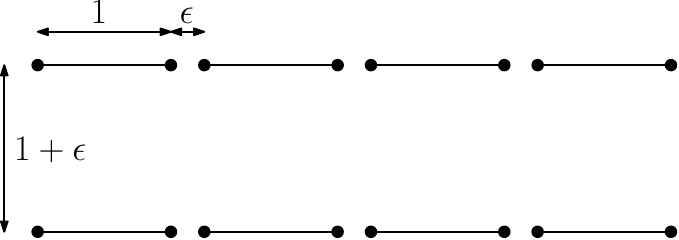}}
&\multicolumn{1}{m{.5\columnwidth}}{\centering\includegraphics[width=.3\columnwidth]{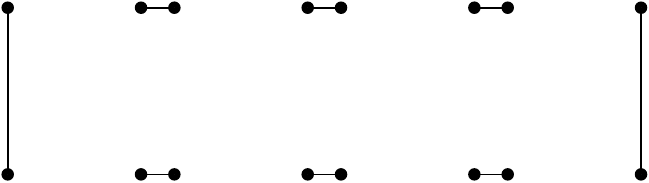}}\\
(a) & (b)
  \end{tabular}$
  \caption{(a) bottleneck matching, (b) minimum weight matching.}
\label{weight}
\end{figure}

Matching and bottleneck matching problems play an important role in graph theory, and thus, they have been studied extensively,  e.g.,~\cite{Abu-Affash2014, Aloupis2013, Chang1992, Efrat2001, Efrat2000, Vaidya1989, Varadarajan1998}. Self-crossing configurations are often undesirable and may even imply an error condition; for example, a potential collision between moving objects, or inconsistency in a layout of a circuit.
In particular, non-crossing matchings are especially important in the context of VLSI circuit layouts~\cite{Lengauer1990} and operations research.

\subsection{Previous Work}

It is desirable to compute a perfect matching of a point set in the plane, which is optimal with respect to some criterion such as: (a) {\em minimum-cost matching} which minimizes the sum of the lengths of all edges; also known as {\em minimum-weight matching} or {\em min-sum matching}, and (b) {\em bottleneck matching} which minimizes the length of the longest edge; also known as {\em min-max matching} \cite{Efrat2000}. For the minimum-cost matching, Vaidya \cite{Vaidya1989} presented an $O(n^{2.5}\log^4 n)$ time algorithm, which was improved to $O(n^{1.5}\log^5 n)$ by Varadarajan \cite{Varadarajan1998}. As for bottleneck matching, Chang et al. \cite{Chang1992} proved that such kind of matching is a subset of 17-RNG (relative neighborhood graph). They presented an algorithm, running in $O(n^{1.5}\sqrt{\log n})$-time to compute a bottleneck matching of maximum cardinality. The matching computed by their algorithm may be crossing.
Efrat and Katz \cite{Efrat2000} extended the result of Chang et al. \cite{Chang1992} to higher dimensions. They proved that a bottleneck matching in any constant dimension can be computed in $O(n^{1.5}\sqrt{\log n})$-time under the $L_\infty$-norm. 

Note that a plane perfect matching of a point set can be computed in $O(n\log n)$-time, e.g., by matching the two leftmost points recursively.

Abu-Affash et al.~\cite{Abu-Affash2014} showed that the bottleneck plane perfect matching problem is NP-hard and presented an algorithm that computes a plane perfect matching whose edges have length at most $2\sqrt{10}$ times the bottleneck, i.e., $2\sqrt{10}\btopt$. They also showed that this problem does not
admit a PTAS (Polynomial Time Approximation Scheme), unless P=NP. Carlsson et al.~\cite{Carlsson2010} showed that the bottleneck plane perfect matching problem for a Euclidean bipartite complete graph is also NP-hard.

\subsection{Our Results}

The main results of this paper are summarized in Table \ref{table1}. We use the unit disk graph as a tool for our approximations. First, we present an $O(n\log n)$-time algorithm in Section \ref{UDG}, that computes a plane matching of size at least $\frac{n-1}{5}$ in a connected unit disk graph. Then in Section \ref{bottleneck-five-over-two} we describe how one can use this algorithm to obtain a bottleneck plane matching of size at least $\frac{n}{5}$ with edges of length at most $\btopt$ in $O(n\log^2 n)$-time.
In Section \ref{bottleneck-five-over-four} we present an $O(n\log n)$-time approximation algorithm that computes a plane matching of size at least $\frac{2n}{5}$ whose edges have length at most $(\sqrt{2}+\sqrt{3})\btopt$. Finally we conclude this paper in Section \ref{conclusion}.

\begin{table}
\centering
\caption{Summary of results.}
\label{table1}
    \begin{tabular}{|l|c|c|c|}
         \hline
             Algorithm &time complexity&bottleneck ($\bt$)    & size of matching  \\ \hline
             Abu-Affash et al. \cite{Abu-Affash2014}&$O(n^{1.5}\sqrt{\log n})$  & $2\sqrt{10}\btopt$ & $n/2$ \\
             Section \ref{bottleneck-five-over-two} &$O(n \log^2 n)$& $\btopt$ & $n/5$ \\
             Section \ref{bottleneck-five-over-four}&$O(n\log n)$ & $(\sqrt{2}+\sqrt{3})\btopt$ & $2n/5$\\
         \hline
    \end{tabular}
\end{table}

\section{Preliminaries}
\label{preliminaries}

Let $P$ denote a set of $n$ points in the plane, where $n$ is even, and let $K(P)$ denote the complete Euclidean graph over $P$. A {\em matching}, $M$, is a subset of edges of $K(P)$ without common vertices. Let $|M|$ denote the {\em cardinality} of $M$, which is the number of edges in $M$. $M$ is a {\em perfect matching} if it covers all the vertices of $P$, i.e., $|M|=\frac{n}{2}$. The {\em bottleneck} of $M$ is defined as the longest edge in $M$. We denote its length by $\bt_M$. A {\em bottleneck perfect matching} is a perfect matching that minimizes the bottleneck length. A {\em plane matching} is a matching with non-crossing edges. 
We denote a plane matching by $M_=$ and a crossing matching by $M_\times$. 
The {\em bottleneck plane perfect matching}, $\MOPT$, is a perfect plane matching which minimizes the length of the longest edge. Let $\btopt$ denote the length of the bottleneck edge in $\MOPT$. In this paper we consider the problem of computing a bottleneck plane matching of $P$. 

The Unit Disk Graph, $UDG(P)$, is defined to have the points of $P$ as its vertices and two vertices $p$ and $q$ are connected by an edge if their Euclidean distance $|pq|$ is at most 1. The {\em maximum plane matching} of $UDG(P)$ is the maximum cardinality matching of $UDG(P)$, which has no pair of crossing edges. 

\begin{lemma}
 If the maximum plane matching in unit disk graphs can be computed in polynomial time, then the bottleneck plane perfect matching problem for point sets can also be solved in polynomial time.
\end{lemma}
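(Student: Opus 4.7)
The plan is to reduce the bottleneck plane perfect matching problem to $O(\log n)$ invocations of a maximum plane matching algorithm on unit disk graphs, via a binary search over the candidate bottleneck values.

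First, I would observe that $\btopt$ must equal the length of some pair of points in $P$, so $\btopt$ lies in the multiset $D=\{|pq|:p,q\in P\}$, which has size $\binom{n}{2}$. Sort $D$ in increasing order to obtain a list of candidate thresholds $d_1\le d_2\le\cdots\le d_{\binom{n}{2}}$.

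Next, for any threshold $\lambda>0$, I would consider the graph $G_\lambda$ on vertex set $P$ whose edges are exactly those pairs at Euclidean distance at most $\lambda$. After uniformly scaling $P$ by a factor $1/\lambda$, this graph is precisely $UDG(P/\lambda)$. The key equivalence is that $P$ admits a plane perfect matching all of whose edges have length at most $\lambda$ if and only if $G_\lambda$ has a plane perfect matching, which in turn holds if and only if the maximum plane matching of $UDG(P/\lambda)$ has cardinality $n/2$. In particular, $\btopt$ equals the smallest $d_i$ for which $G_{d_i}$ admits a plane perfect matching, and the corresponding maximum plane matching of $UDG(P/d_i)$ is a bottleneck plane perfect matching $\MOPT$ of $P$.

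Using the polynomial-time maximum plane matching algorithm for unit disk graphs provided by hypothesis, I would then binary search on the index $i$ over the sorted list of distances: at each step, scale $P$ by $1/d_i$, compute the maximum plane matching of the resulting unit disk graph, and test whether its size equals $n/2$. The predicate ``$G_{d_i}$ has a plane perfect matching'' is monotone in $i$, so $O(\log n)$ iterations suffice to locate the smallest such $d_i$, and the matching returned at that step is $\MOPT$. Each iteration runs in polynomial time, and sorting $D$ is polynomial, so the whole procedure is polynomial.

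The only subtlety, rather than an obstacle, is the monotonicity and equivalence argument in the middle step: one must note both that edges of length exceeding $\lambda$ cannot participate in any matching of bottleneck at most $\lambda$, and that any plane matching in $UDG(P/\lambda)$ corresponds, after inverse scaling, to a plane matching of $P$ with all edges of length at most $\lambda$. Everything else is a straightforward parametric search, so no significant difficulty is expected.
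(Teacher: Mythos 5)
Your proposal is correct and follows essentially the same route as the paper: both reduce the problem to testing, over the set $D$ of pairwise distances, whether the (scaled) unit disk graph at threshold $\lambda$ admits a plane matching of size $n/2$, and both identify $\btopt$ as the smallest such $\lambda\in D$. Your addition of a binary search (justified by the monotonicity of the predicate) is a minor efficiency refinement that does not change the substance of the argument.
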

\begin{proof}
Let $D=\{|pq|:p,q\in P\}$ be the set of all distances determined by pairs of points in $P$. Note that $\btopt\in D$. For each $\bt\in D$, define the ``unit'' disk graph $DG(\bt,P)$, in which two points $p$ and $q$ are connected by an edge if and only if $|pq|\le\bt$. Then $\btopt$ is the minimum $\bt$ in $D$ such that $DG(\bt,P)$ has a plane matching of size $\frac{n}{2}$.
\end{proof}

The Gabriel Graph, $GG(P)$, is defined to have the points of $P$ as its vertices and two vertices $p$ and $q$ are connected by an edge if the disk with diameter $pq$ does not contain any point of $P\setminus\{p,q\}$ in its interior and on its boundary.

\begin{lemma}
\label{MST-UDG}
If the unit disk graph $UDG(P)$ of a point set $P$ is connected, then $UDG(P)$ and $K(P)$ have the same minimum spanning tree.
\end{lemma}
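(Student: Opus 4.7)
The plan is to prove the stronger statement that every edge of any minimum spanning tree of $K(P)$ has length at most $1$, hence lies in $UDG(P)$. Once this is established, the lemma follows immediately: any spanning tree of $K(P)$ that sits inside $UDG(P)$ is a spanning tree of $UDG(P)$, and since its total weight is minimal among all spanning trees of $K(P)$ (a supergraph), it is a fortiori minimal among those of $UDG(P)$.

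First I would fix a minimum spanning tree $T$ of $K(P)$ and suppose, for contradiction, that some edge $e=(u,v)\in T$ has length $|uv|>1$. Deleting $e$ splits $T$ into two subtrees whose vertex sets $A$ and $B$ partition $P$, with $u\in A$ and $v\in B$. Since $UDG(P)$ is connected by hypothesis, there exists a path in $UDG(P)$ from $u$ to $v$; along this path some edge $e'=(a,b)$ must cross the cut $(A,B)$, and by the definition of the unit disk graph we have $|ab|\le 1<|uv|$. Replacing $e$ by $e'$ in $T$ yields a spanning tree of $K(P)$ of strictly smaller total weight, contradicting the minimality of $T$. Therefore every edge of $T$ has length at most $1$ and hence belongs to $UDG(P)$.

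This shows $T\subseteq UDG(P)$, and since $T$ spans $P$, $T$ is a spanning tree of $UDG(P)$. Because the total weight of $T$ is minimum over all spanning trees of $K(P)$, and every spanning tree of $UDG(P)$ is also a spanning tree of $K(P)$, $T$ is also a minimum spanning tree of $UDG(P)$. Running the argument in both directions, any minimum spanning tree of $UDG(P)$ is likewise a minimum spanning tree of $K(P)$, so the two graphs share the same minimum spanning tree (or the same set of MSTs, in case of ties).

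The main obstacle is conceptual rather than computational: it is the cut-based exchange argument described above. If one prefers to avoid any discussion of non-uniqueness of the MST, one can simply observe that the argument in fact shows the cut property holds with witness edges of length $\le 1$, and hence a standard greedy construction (Kruskal or Prim) on $K(P)$ can be made to select only edges of $UDG(P)$, producing an MST common to both graphs.
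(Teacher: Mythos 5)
Your proof is correct, but it runs in the opposite direction from the paper's. You start from a minimum spanning tree $T$ of $K(P)$ and use a cut--exchange argument: if $T$ contained an edge longer than $1$, the connectivity of $UDG(P)$ would supply a strictly shorter edge across the corresponding cut, a contradiction; hence $T \subseteq UDG(P)$, and the subgraph relation then forces the two MSTs (or MST sets) to coincide. The paper instead argues algorithmically: it runs Kruskal's algorithm on $UDG(P)$ to obtain a tree $T$, observes that every edge of $K(P)$ outside $UDG(P)$ has length greater than one while all edges of $T$ have length at most one, and concludes in one line that $T$ is also an MST of $K(P)$ --- implicitly relying on exactly the greedy/exchange property that you spell out explicitly. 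Your version is the more self-contained of the two: it fills in the step the paper leaves to the reader, it proves both directions (an MST of $K(P)$ is one of $UDG(P)$ and conversely), and it addresses the non-uniqueness caveat behind the phrase ``the same minimum spanning tree.'' The paper's Kruskal-based phrasing, on the other hand, leads more directly to its Corollary~\ref{MST-DEL}, namely that the minimum spanning forest of $UDG(P)$ can be obtained by computing $MST(P)$ and discarding the edges longer than one, which is how the lemma is actually used later. Both arguments use only the two same ingredients --- connectivity of $UDG(P)$ and the length threshold --- so the difference is one of presentation rather than substance.
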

\begin{proof}
 By running Kruskal's algorithm on $UDG(P)$, we get a minimum spanning tree, say $T$. All the edges of $T$ have length at most one, and the edges of $K(P)$ which do not belong to $UDG(P)$ all have length greater than one. Hence, $T$ is also a minimum spanning tree of $K(P)$.
\end{proof}

As a direct consequence of Lemma \ref{MST-UDG} we have the following corollary:

\begin{corollary}
\label{MST-DEL}
Consider the unit disk graph $UDG(P)$ of a point set $P$. We can compute the minimum spanning forest of $UDG(P)$, by first computing the minimum spanning tree of $P$ and then removing the edges whose length is more than one.
\end{corollary}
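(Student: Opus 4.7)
The plan is to combine Lemma \ref{MST-UDG}, applied componentwise, with a Kruskal-style analysis on $K(P)$. First I would decompose $UDG(P)$ into its connected components $C_1,\dots,C_k$, with vertex sets $P_1,\dots,P_k$. By the definition of a minimum spanning forest, $\mathrm{MSF}(UDG(P))$ is the disjoint union of the minimum spanning trees of the $UDG(P_i)$, and since each $UDG(P_i)$ is connected, Lemma \ref{MST-UDG} identifies each such tree with $\mathrm{MST}(K(P_i))$. In particular, every edge of $\mathrm{MSF}(UDG(P))$ has length at most one, and these trees collectively span $P$.

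Next I would simulate Kruskal's algorithm on $K(P)$, sorting edges in non-decreasing length. All edges of length at most one (i.e., the edges of $UDG(P)$) are processed strictly before any edge of length greater than one. In this first phase the algorithm behaves exactly as Kruskal run separately on each $UDG(P_i)$, so upon completion of the phase it has produced precisely the forest $\mathrm{MSF}(UDG(P))$ described above. The second phase then adds only the $k-1$ long bridges needed to merge these $k$ trees into a single spanning tree $T$ of $K(P)$; the existence of such bridges is guaranteed because $K(P)$ is complete and hence connected. Deleting from $T$ every edge of length greater than one therefore restores exactly $\mathrm{MSF}(UDG(P))$, which is what the corollary asserts.

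The one subtlety, and the main obstacle in making the argument fully rigorous, is that MSTs of $K(P)$ need not be unique when several edges have equal length, so the outcome of the procedure could in principle depend on the choice of MST. To close this gap I would invoke the cut property: if any MST $T'$ of $K(P)$ contained an edge $e$ of length greater than one whose endpoints $u,v$ both lie in some $C_i$, then the $UDG$-path between $u$ and $v$ inside $C_i$ must contain an edge crossing the fundamental cut of $T'\setminus\{e\}$, and this edge has length at most one, contradicting the minimality of $T'$. Thus every MST of $K(P)$, not merely the one produced by Kruskal, pruned of its long edges yields $\mathrm{MSF}(UDG(P))$, and the corollary follows.
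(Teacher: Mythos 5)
Your first two paragraphs are correct and follow essentially the route the paper intends: the corollary is presented there without proof as a direct consequence of Lemma \ref{MST-UDG}, the implicit argument being exactly yours---apply the lemma componentwise, and note that Kruskal's algorithm on $K(P)$ processes every edge of length at most one before any longer edge, so within the short phase it builds precisely the minimum spanning forest of $UDG(P)$ and afterwards only adds bridges between the components of $UDG(P)$.

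The place where your write-up slips is the final ``cut property'' step, which you introduce to make the conclusion independent of which MST of $K(P)$ is computed. The invariant you actually prove---no MST edge of length greater than one has \emph{both endpoints} in the same component $C_i$ of $UDG(P)$---is weaker than what is needed. The scenario to exclude is that two vertices $u,v$ of the same component $C_i$ are joined in $T'$ only via a path that leaves $C_i$ through one long edge and returns through another; each such long edge individually joins distinct components, so your invariant does not forbid this, yet pruning the long edges would then disconnect $C_i$ and the pruned tree would not be a spanning forest of $UDG(P)$. The repair uses the same exchange idea aimed at the right cut: if some long edge $e$ lies on the $T'$-path between $u,v\in C_i$, then the fundamental cut of $T'\setminus\{e\}$ separates $u$ from $v$, the $UDG$-path from $u$ to $v$ inside $C_i$ must cross that cut by an edge $f$ with $|f|\le 1<|e|$, and $T'\setminus\{e\}\cup\{f\}$ is a strictly cheaper spanning tree of $K(P)$, a contradiction. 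With the invariant stated in this form, the short edges of any MST of $K(P)$ span each $C_i$; their minimality within each component then follows by one more routine exchange (or from your componentwise identification via Lemma \ref{MST-UDG}), and the corollary holds for every MST of $P$, not only the Kruskal one.
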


\begin{lemma}
\label{same-component}
 For each pair of crossing edges $(u,v)$ and $(x,y)$ in $UDG(P)$, the four endpoints $u$, $v$, $x$, and $y$ are in the same component of $UDG(P)$. 
\end{lemma}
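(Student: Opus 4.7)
The plan is to exploit the simple Euclidean geometry of crossing segments together with the length-$\le 1$ condition that defines edges of $UDG(P)$. Since $(u,v)$ and $(x,y)$ are edges of $UDG(P)$ we have $|uv|\le 1$ and $|xy|\le 1$, and since the segments cross they share a common interior point $o$ that lies on both segments. The whole proof then reduces to showing that at least one of the four "cross" pairs among $\{u,v\}\times\{x,y\}$ has distance $\le 1$, because such a pair would form a third $UDG(P)$-edge that, together with $(u,v)$ and $(x,y)$, puts all four vertices in one component.

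The key step is a triangle-inequality computation using the intersection point $o$. From $o\in uv$ and $o\in xy$ we get the exact identities $|uv|=|uo|+|ov|$ and $|xy|=|xo|+|oy|$. Applying the triangle inequality in the triangles $uox$ and $voy$ yields
\[
|ux|+|vy|\;\le\;(|uo|+|ox|)+(|vo|+|oy|)\;=\;|uv|+|xy|\;\le\;2.
\]
Hence $\min(|ux|,|vy|)\le 1$, so at least one of $ux$ or $vy$ is an edge of $UDG(P)$. (An analogous bound holds for the other pair $|uy|+|vx|$, but one such edge already suffices.)

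Finally, suppose without loss of generality that $|ux|\le 1$, so $ux$ is an edge of $UDG(P)$. Combined with the edges $(u,v)$ and $(x,y)$, this gives a path in $UDG(P)$ that visits all of $u$, $v$, $x$, $y$, placing them in the same connected component of $UDG(P)$, as required.

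I do not expect any real obstacle here; the only point worth being careful about is that crossings are taken in the strict sense (a common interior point $o$ exists on both open segments), which is what guarantees the two identities $|uv|=|uo|+|ov|$ and $|xy|=|xo|+|oy|$ used in the triangle-inequality step. Once those identities are in hand, the rest of the argument is immediate.
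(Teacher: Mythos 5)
Your proof is correct, but it follows a different route from the paper's. The paper observes that the four endpoints form a convex quadrilateral, takes the vertex with the largest interior angle (say $u$, with $\angle xuy \ge \pi/2$), and concludes from the triangle $\triangle xuy$ that \emph{both} segments $(u,x)$ and $(u,y)$ are no longer than $(x,y)\le 1$, so both are edges of $UDG(P)$; this is the same angle-based edge-swapping idea the paper reuses in its $\frac{1}{3}$-approximation algorithm. You instead use the crossing point $o$ and the summed triangle inequality $|ux|+|vy| \le |uv|+|xy| \le 2$, which guarantees that at least \emph{one} cross pair is a unit-disk edge --- enough to connect all four vertices by a path. Your argument is somewhat more elementary: it needs no angle or convexity considerations, it works verbatim in any normed plane, and it even tolerates the degenerate case where the common point $o$ is an endpoint (the identities $|uv|=|uo|+|ov|$ and $|xy|=|xo|+|oy|$ still hold for $o$ on the closed segments). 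What the paper's approach buys in exchange is a slightly stronger conclusion --- two new short edges rather than one, both bounded by $\max\{|uv|,|xy|\}$ --- which is not needed for this lemma but matches the replacement argument used elsewhere in the paper.
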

\begin{proof}
Note that the quadrilateral $Q$ formed by the end points $u$, $v$, $x$, and $y$ is convex. W.l.o.g. assume that the angle $\angle xuy$ is the largest angle in $Q$. Clearly $\angle xuy \ge \pi/2$, and hence, in triangle $\bigtriangleup xuy$, the angles $\angle yxu$ and $\angle uyx$ are both less than $\pi/2$. Thus, the edges $(u,x)$ and $(u,y)$ are both less than $(x,y)$. This means that $(u,x)$ and $(u,y)$ are also edges of $UDG(P)$, thus, their four endpoints belong to the same component.  
\end{proof}

As a direct consequence of Lemma \ref{same-component} we have the following corollary: 

\begin{corollary}
\label{non-crossing-edges}
Any two edges that belong to different components of $UDG(P)$ do not cross.
\end{corollary}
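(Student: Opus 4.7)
The plan is to prove this by contraposition directly from Lemma \ref{same-component}. Suppose, toward a contradiction, that we have two edges $(u,v)$ and $(x,y)$ of $UDG(P)$ that cross, but that belong to different connected components of $UDG(P)$. Then $(u,v)$ and $(x,y)$ form a pair of crossing edges of $UDG(P)$, so Lemma \ref{same-component} applies and forces all four endpoints $u$, $v$, $x$, $y$ to lie in the same component. This contradicts the assumption that the two edges lie in different components, since $u$ and $v$ lie in the component of $(u,v)$ while $x$ and $y$ lie in the (assumed distinct) component of $(x,y)$.

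Since the corollary is explicitly advertised as a direct consequence of Lemma \ref{same-component}, essentially no new ideas are required; the whole argument is one logical step. The only minor care needed is to spell out what ``belong to different components'' means for an edge (namely that its two endpoints lie in a single component, and that component is different for the two edges), so that the contradiction with Lemma \ref{same-component} is unambiguous. There is no real obstacle here, which is presumably why the authors state it as a corollary rather than a separate lemma with its own proof.
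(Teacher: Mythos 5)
Your proposal is correct and matches the paper's intent exactly: the corollary is stated as an immediate contrapositive of Lemma \ref{same-component}, which is precisely the one-step argument you give. No further comment is needed.
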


\begin{figure}[ht]
  \centering
    \includegraphics[width=0.4\textwidth]{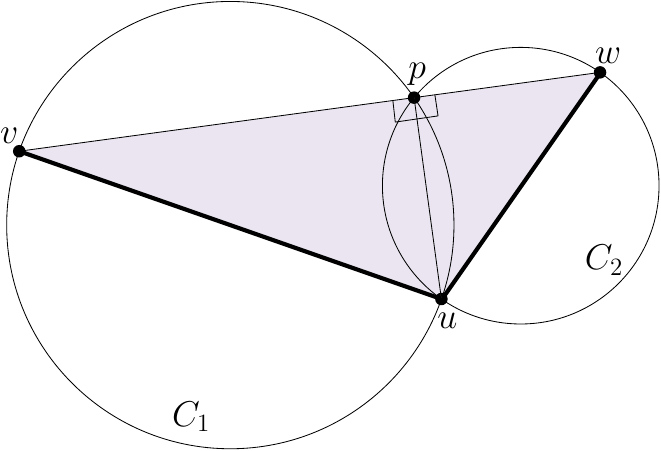}
  \caption{In $MST(P)$, the triangle $\bigtriangleup uvw$ formed by two adjacent edges $(u,v)$ and $(u,w)$, is empty.}
\label{empty-triangle-figure}
\end{figure}

Let $MST(P)$ denote the Euclidean minimum spanning tree of $P$.

\begin{lemma}
\label{empty-triangle-lemma}
If $(u,v)$ and $(u,w)$ are two adjacent edges in $MST(P)$, then the triangle $\bigtriangleup uvw$ has no point of $P\setminus\{u, v, w\}$ inside or on its boundary.
\end{lemma}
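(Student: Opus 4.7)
The plan is to argue by contradiction. Suppose some $z\in P\setminus\{u,v,w\}$ lies in the closed triangle $\bigtriangleup uvw$. I will derive a contradiction by showing that $z$ must view at least one of the two MST edges $(u,v)$ or $(u,w)$ under a non-acute angle, and such a configuration is incompatible with these edges lying in $MST(P)$.

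First, I would establish inline the following Gabriel-style property: for every edge $(a,b)\in MST(P)$, no point $p\in P\setminus\{a,b\}$ satisfies $\angle apb\geq\pi/2$. This follows from a standard MST-exchange argument. By Thales' theorem such a $p$ would satisfy $|ap|<|ab|$ and $|bp|<|ab|$ (strict, since $p\notin\{a,b\}$). Removing $(a,b)$ splits $MST(P)$ into two subtrees, and $p$ lies in exactly one of them; adding whichever of $(p,a)$ or $(p,b)$ reconnects the tree gives a strictly lighter spanning tree, contradicting the minimality of $MST(P)$.

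Applied to the two edges $(u,v)$ and $(u,w)$, this yields $\angle uzv<\pi/2$ and $\angle uzw<\pi/2$. A short case analysis on the position of $z$ in $\bigtriangleup uvw$ now closes the proof. If $z$ lies in the interior of the triangle, the three angles $\angle uzv$, $\angle vzw$, and $\angle uzw$ tile the full turn around $z$, so they sum to $2\pi$ with each in $(0,\pi)$; hence $\angle vzw=2\pi-\angle uzv-\angle uzw>\pi$, which is impossible. If $z$ lies on the open segment $uv$ (respectively $uw$), then $\angle uzv=\pi$ (respectively $\angle uzw=\pi$), contradicting the bound $<\pi/2$. Finally, if $z$ lies on the open segment $vw$, rays $zv$ and $zw$ are antiparallel, so $\angle uzv+\angle uzw=\pi$ and at least one of the two angles is $\geq\pi/2$, a contradiction.

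The main obstacle is the Gabriel-style first step; once that is in hand, the rest is a routine angle count and no other result from the paper is needed.
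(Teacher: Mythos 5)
Your proof is correct. It rests on the same key fact as the paper's proof, just phrased angularly: your statement that no third point sees an MST edge under an angle $\ge\pi/2$ is exactly the emptiness of the closed disk with diameter that edge, which the paper invokes by citing $MST(P)\subseteq GG(P)$ and taking the two circles $C_1$, $C_2$ with diameters $\overline{uv}$ and $\overline{uw}$. Where you genuinely diverge is in how the conclusion is extracted and in self-containment. The paper argues constructively: it locates the second intersection point $p$ of $C_1$ and $C_2$, uses Thales to get $\angle upv=\angle wpu=\pi/2$, concludes that $p$ lies on $\overline{vw}$, and hence that $\bigtriangleup uvw\subset C_1\cup C_2$, so the two empty disks cover the triangle (it treats the degenerate case $\angle vuw=\pi$ separately). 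You instead run a contradiction at a hypothetical point $z$ in the closed triangle: both viewing angles $\angle uzv,\angle uzw<\pi/2$ force $\angle vzw>\pi$ when $z$ is interior, and the boundary positions are killed directly. Your route buys two things: it re-proves the Gabriel-style property from scratch by an MST exchange argument (with strict inequality even at angle exactly $\pi/2$, so boundary points of the disks are handled, matching the lemma's ``inside or on its boundary'' claim without leaning on the paper's closed-disk definition of $GG$), and it avoids the covering construction, whose step $\bigtriangleup uvw\subset C_1\cup C_2$ the paper asserts somewhat tersely. The paper's route, in exchange, gives the stronger and reusable geometric picture that the whole triangle is covered by the two empty diametral disks. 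One cosmetic nit: the inequality $|ap|<|ab|$, $|bp|<|ab|$ for $\angle apb\ge\pi/2$ is cleaner to justify by the law of cosines than by an appeal to Thales, but the claim itself is right.
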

\begin{proof}
If the angle between the line segments $\overline{uv}$ and $\overline{uw}$ is equal to $\pi$ then clearly there is no other point of $P$ on $\overline{uv}$ and $\overline{uw}$. So assume that $\angle vuw < \pi$. Refer to Figure \ref{empty-triangle-figure}. Since $MST(P)$ is a subgraph of the Gabriel graph, the circles $C_1$ and $C_2$ with diameters $\overline{uv}$ and $\overline{uw}$ are empty. Since $\angle vuw < \pi$, $C_1$ and $C_2$ intersect each other at two points, say $u$ and $p$. Connect $u$, $v$ and $w$ to $p$. Since $\overline{uv}$ and $\overline{uw}$ are the diameters of $C_1$ and $C_2$, $\angle upv=\angle wpu=\pi/2$.
This means that $\overline{vw}$ is a straight line segment. Since $C_1$ and $C_2$ are empty and $\bigtriangleup uvw \subset C_1 \cup C_2$, it follows that $\bigtriangleup uvw \cap P = \{u, v, w\}$.
\end{proof}

\begin{corollary}
\label{empty-convex-hull}
Consider $MST(P)$ and let $N_v$ be the set of neighbors of a vertex $v$ in $MST(P)$. Then the convex hull of $N_v$ contains no point of $P$ except $v$ and the set $N_v$.
\end{corollary}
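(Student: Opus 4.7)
The plan is to exhibit, for each point $p \in P \cap \mathrm{conv}(N_v)$ distinct from $v$, an empty triangle of the form guaranteed by Lemma \ref{empty-triangle-lemma} that contains $p$, thereby forcing $p$ to equal one of its vertices. The key structural observation is that every vertex of the convex polygon $\mathrm{conv}(N_v)$ belongs to $N_v$, so each boundary edge of $\mathrm{conv}(N_v)$ has the form $\overline{uu'}$ with $u, u' \in N_v$.

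I would then fix an arbitrary $p \in P \cap \mathrm{conv}(N_v)$ with $p \neq v$ and consider the ray $R$ starting at $v$ and passing through $p$. Let $q$ denote the farthest point of $R \cap \mathrm{conv}(N_v)$ along the ray; this $q$ lies on the boundary of $\mathrm{conv}(N_v)$, hence on some edge $\overline{uu'}$ with $u, u' \in N_v$. A brief case analysis, depending on whether $v$ lies inside or outside $\mathrm{conv}(N_v)$, shows that $p$ belongs to the subsegment $\overline{vq}$ of $R$, and hence $p$ belongs to the triangle $\bigtriangleup vuu'$.

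Since $u$ and $u'$ are both neighbors of $v$ in $MST(P)$, the pair $(v,u)$ and $(v,u')$ are two adjacent MST edges, so Lemma \ref{empty-triangle-lemma} applies and yields that $\bigtriangleup vuu'$ contains no point of $P$ other than $v$, $u$, and $u'$. This forces $p \in \{v,u,u'\} \subseteq \{v\} \cup N_v$, completing the argument. The main obstacle is the geometric step verifying $p \in \overline{vq}$: when $v \in \mathrm{conv}(N_v)$ the intersection $R \cap \mathrm{conv}(N_v)$ equals $\overline{vq}$ and the claim is immediate, while when $v \notin \mathrm{conv}(N_v)$ the intersection is a segment $\overline{q_1 q}$ with $p$ between $q_1$ and $q$ on $R$, so the ray-parameter of $p$ does not exceed that of $q$ and $p$ still lies on $\overline{vq}$.
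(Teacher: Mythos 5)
Your argument is correct and is essentially the paper's intended reasoning: the paper states this corollary as a direct consequence of Lemma \ref{empty-triangle-lemma}, and your covering argument (every point of $\mathrm{conv}(N_v)$ lies in a triangle $\bigtriangleup vuu'$ with $u,u'\in N_v$, which is empty by that lemma) is exactly the natural way to fill in that deduction. The ray-shooting case analysis for $v$ inside versus outside $\mathrm{conv}(N_v)$ is sound, so nothing is missing beyond the trivial degenerate cases where $\mathrm{conv}(N_v)$ is a point or a segment.
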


The shaded area in Figure \ref{empty-skeleton-fig} shows the union of all these convex hulls.

\section{Plane Matching in Unit Disk Graphs}
\label{UDG}

In this section we present two approximation algorithms for computing a maximum plane matching in a unit disk graph $UDG(P)$. In Section \ref{three-approximation} we present a straight-forward $\frac{1}{3}$-approximation algorithm; it is unclear whether this algorithm runs in polynomial time. For the case when $UDG(P)$ is connected, we present a $\frac{2}{5}$-approximation algorithm in Section \ref{five-over-two-approximation}.

\subsection{$\frac{1}{3}$-approximation algorithm}
\label{three-approximation}

Given a possibly disconnected unit disk graph $UDG(P)$, we start by computing a (possibly crossing) maximum matching $\MC$ of $UDG(P)$ using Edmonds algorithm \cite{Edmonds1965}. Then we transform $\MC$ to another (possibly crossing) matching $\MC'$ with some properties, and then pick at least one-third of its edges which satisfy the non-crossing property. Consider a pair of crossing edges $(p,q)$ and $(r,s)$ in $\MC$, and let $c$ denote the intersection point. If their smallest intersection angle is at most $\pi/3$, we replace these two edges with new ones. For example if $\angle pcr \le \pi/3$, we replace $(p,q)$ and $(r,s)$ with new edges $(p,r)$ and $(q,s)$. Since the angle between them is at most $\pi/3$, the new edges are not longer than the older ones, i.e. $\max\{|pr|,|qs|\}\le\max\{|pq|,|rs|\}$, and hence the new edges belong to the unit disk graph. On the other hand the total length of the new edges is strictly less than the older ones; i.e. $|pr|+|qs|<|pq|+|rs|$. For each pair of intersecting edges in $\MC$, with angle at most $\pi/3$, we apply this replacement. We continue this process until we have a matching $\MC'$ with the property that if two matched edges intersect, each of the angles incident on $c$ is larger than $\pi/3$.

For each edge in $M'_{\times}$, consider the counter clockwise angle it makes with the positive $x$-axis; this angle is in the range $[0,\pi)$. Using these angles, we partition the edges of $M'_{\times}$ into three subsets, one subset for the angles $[0,\pi/3)$, one subset for the angles $[\pi/3,2\pi/3)$, and
one subset for the angles $[2\pi/3,\pi)$. Observe that edges within one subset are non-crossing. Therefore, if we output the largest subset, we obtain a non-crossing matching of size at least
$|M'_{\times}|/3 = |M_{\times}|/3$.

Since in each step (each replacement) the total length of the matched edges decreases, the replacement process converges and the algorithm will stop. We do not know whether this process converges in a polynomial number of steps in the size of $UDG(P)$.

\subsection{$\frac{2}{5}$-approximation algorithm for connected unit disk graphs}
\label{five-over-two-approximation}

In this section we assume that the unit disk graph $UDG(P)$ is connected. Monma et al. \cite{Monma1992} proved that every set of points in the plane admits a minimum spanning tree of degree at most five which can be computed in $O(n\log n)$ time. By Lemma \ref{MST-UDG}, the same claim holds for $UDG(P)$. Here we present an algorithm which extracts a plane matching $M$ from $MST(P)$. Consider a minimum spanning tree $T$ of $UDG(P)$ with vertices of degree at most five. We define the {\em skeleton tree}, ${T'}$, as the tree obtained from $T$ by removing all its leaves; see Figure \ref{empty-skeleton-fig}. Clearly ${T'} \subseteq T \subseteq UDG(P)$. For clarity we use $u$ and $v$ to refer to the leaves of $T$ and $T'$ respectively. In addition, let $v$ and $v'$, respectively, refer to the copies of a vertex $v$ in $T$ and $T'$. In each step, pick an arbitrary leaf $v'\in T'$. By the definition of ${T'}$, it is clear that the copy of $v'$ in $T$, i.e. $v$, is connected to vertices $u_1,\dots, u_k$, for some $1\le k \le 4$, which are leaves of $T$ (if $T'$ has one vertex then $k\le5$). Pick an arbitrary leaf $u_i$ and add $(v, u_i)$ as a matched pair to $M$. For the next step we update $T$ by removing $v$ and all its adjacent leaves. We also compute the new skeleton tree and repeat this process. 
In the last iteration, $T'$ is empty and we may be left with a tree $T$ consisting of one single vertex or one single edge. If $T$ consists of one single vertex, we disregard it, otherwise we add its only edge to $M$.

\begin{figure}[ht]
  \centering
    \includegraphics[width=0.7\textwidth]{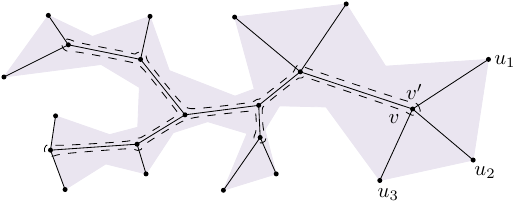}
  \caption{Minimum spanning tree $T$ with union of empty convex hulls. The skeleton tree $T'$ is surrounded by dashed line, and $v'$ is a leaf in $T'$. }
\label{empty-skeleton-fig}
\end{figure}

The formal algorithm is given as {\scshape PlaneMatching}, which receives a point set $P$---whose unit disk graph is connected---as input and returns a matching $M$ as output. The function MST5$(P)$ returns a Euclidean minimum spanning tree of $P$ with degree at most five, and the function Neighbor$(v', T')$ returns the only neighbor of leaf $v'$ in $T'$.

\begin{algorithm}                      % enter the algorithm environment
\caption{{\scshape PlaneMatching}$(P)$}          % give the algorithm a caption
\label{alg1} 
\require{set $P$ of $n$ points in the plane, such that $UDG(P)$ is connected.}\\
\ensure{plane matching $M$ of $MST(P)$ with $|M|\ge\frac{n-1}{5}$.}
\begin{algorithmic}[1]
    \State $M \gets \emptyset$
    \State $T \gets $MST5$(P)$
    \State $T' \gets $SkeletonTree$(T)$
    \While {$T' \neq \emptyset$}
	\State $v' \gets$ {a leaf of } $T'$
	\State $L_v \gets$ {set of leaves  connected to } $v$ { in } $T$
        \State $u \gets$ {an element of } $L_v$
	\State $M \gets M\cup\{(v,u)\}$	
	\State $T\gets T\setminus (\{v\}\cup L_v)$
	\If {$deg($Neighbor$(v', T'))=2$}
	    \State $T'\gets T'\setminus \{v',${ Neighbor}$(v', T')\}$
	\Else
	    \State $T'\gets T'\setminus \{v'\}$
	\EndIf
    \EndWhile
    \If {$T$ {consists of one single edge}}
	\State $M \gets M\cup T$
    \EndIf
    \State \Return $M$
\end{algorithmic}
\end{algorithm}

\begin{lemma}
\label{n-minus-one-lemma}
Given a set $P$ of $n$ points in the plane such that $UDG(P)$ is connected, algorithm {\scshape PlaneMatching} returns a plane matching $M$ of $MST(P)$ of size $|M|\ge \frac{n-1}{5}$. Furthermore, $M$ can be computed in $O(n\log n)$ time.
\end{lemma}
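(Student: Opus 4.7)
The plan is to verify four things: (i) every edge placed into $M$ is an edge of $T=MST(P)$, which implies $M$ is non-crossing; (ii) $M$ is a matching (no vertex appears twice); (iii) $|M|\ge (n-1)/5$; and (iv) the whole procedure runs in $O(n\log n)$ time.

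First I would observe that every edge added in the while loop is of the form $(v,u)$ with $u\in L_v$, and hence is an edge of the current $T$; the leftover edge added after the loop (when $T$ is a single edge) is also an edge of $T$. Thus $M$ consists of edges of $T=MST(P)$, and since $MST(P)$ is a subgraph of the Gabriel graph and therefore planar, $M$ is non-crossing. For the matching property, note that once $(v,u)$ is added to $M$, both $v$ and all of $L_v$ (which contains $u$) are removed from $T$, so neither $v$ nor $u$ can reappear as an endpoint in any later iteration; the leftover single edge, if added, involves two vertices that were never deleted and hence never previously matched.

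The core of the proof is the size lower bound. By Monma et al., the tree $T=$MST5$(P)$ has maximum degree $5$. In a normal iteration, $v'$ is a leaf of $T'$ with a unique $T'$-neighbor Neighbor$(v',T')$, so exactly one of $v$'s $T$-neighbors is a non-leaf of $T$; hence $|L_v|\le 4$ and the iteration removes at most $5$ vertices from $T$ while adding one edge. The exceptional case is a singleton iteration with $T'=\{v'\}$, in which every $T$-neighbor of $v$ is a leaf of $T$ and $|L_v|\le 5$, so up to $6$ vertices are deleted at once. I would argue that a singleton iteration can occur at most once and is necessarily the final iteration, because at that moment the current $T$ is a star centered at $v$ and is emptied completely. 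Letting $I$ be the total number of iterations and $r\in\{0,1,2\}$ the vertices of $T$ remaining after the loop, I would split into the case with a singleton iteration (giving $n\le 5(I-1)+6$ and $r=0$, hence $|M|=I\ge (n-1)/5$) and the case without (where $n-r\le 5I$ and the post-loop if-statement contributes one extra edge exactly when $r=2$). Checking the subcases $r=0,1,2$ in the second case then yields $|M|\ge (n-1)/5$ as well.

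For the running time, MST5$(P)$ is computed in $O(n\log n)$ by Monma et al., and the initial skeleton $T'$ is built in $O(n)$ by stripping the degree-one vertices of $T$. Maintaining $T$ and $T'$ as adjacency lists with degree counters, each iteration does work proportional to the degree of the processed $v$, which is $O(1)$ by the degree-$5$ bound; since each vertex is deleted from $T$ at most once, the cumulative work of the loop is $O(n)$. The dominant cost is MST5, giving $O(n\log n)$ overall. I expect the most delicate step to be the singleton-case analysis in (iii): handling the jump from the ``$\le 5$'' to the ``$\le 6$'' deletion count and verifying that such an iteration must be terminal is what drives the bound $(n-1)/5$ rather than a cleaner $n/5$.
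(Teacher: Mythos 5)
Your proposal is correct and follows essentially the same route as the paper: planarity because every chosen edge lies in the plane tree $T=MST(P)$, a per-iteration accounting for the size bound, and Monma et al.'s degree-five spanning tree plus $O(1)$ work per iteration for the $O(n\log n)$ time. The only real difference is that the paper counts \emph{edges} ($T$ starts with $n-1$ edges and each iteration consumes at most five of them while contributing one to $M$, which yields $|M|\ge\frac{n-1}{5}$ uniformly), whereas your \emph{vertex} count is what forces the separate singleton-skeleton case (six vertices removed) and the residue analysis $r\in\{0,1,2\}$ after the loop; both accountings are valid.
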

\begin{proof}
In each iteration an edge $(v, u_i)\in T$ is added to $M$. Since $T$ is plane, $M$ is also plane and $M$ is a matching of $MST(P)$.

Line 5 picks $v'\in T'$ which is a leaf, so its analogous vertex $v\in T$ is connected to at least one leaf. In each iteration we select an edge incident to one of the leaves and add it to $M$, then disregard all other edges connected to $v$ (line 9). So for the next iteration $T$ looses at most five edges. Since $T$ has $n-1$ edges initially and we add one edge to $M$ out of each five edges of $T$, we have $|M|\ge\frac{n-1}{5}$. 

According to Corollary \ref{MST-DEL} and by \cite{Monma1992}, line 2 takes $O(n\log n)$ time. The while-loop is iterated $O(n)$ times and in each iteration, lines 5-13 take constant time. So, the total running time of algorithm {\scshape PlaneMatching} is $O(n\log n)$. 
\end{proof}

The size of a maximum matching can be at most $\frac{n}{2}$. Therefore, algorithm {\scshape PlaneMatching} computes a matching of size at least $\frac{2(n-1)}{5n}$ times the size of a perfect matching, and hence, when $n$ is large enough, {\scshape PlaneMatching} is a $\frac{2}{5}$-approximation algorithm. On the other hand there are unit disk graphs whose maximum matchings have size $\frac{n-1}{5}$; see Figure \ref{worst-case-optimal}. In this case {\scshape PlaneMatching} returns a maximum matching. In addition, when $UDG(P)$ is a tree or a cycle, {\scshape PlaneMatching} returns a maximum matching. 

\begin{figure}[ht]
  \centering
    \includegraphics[width=0.7\textwidth]{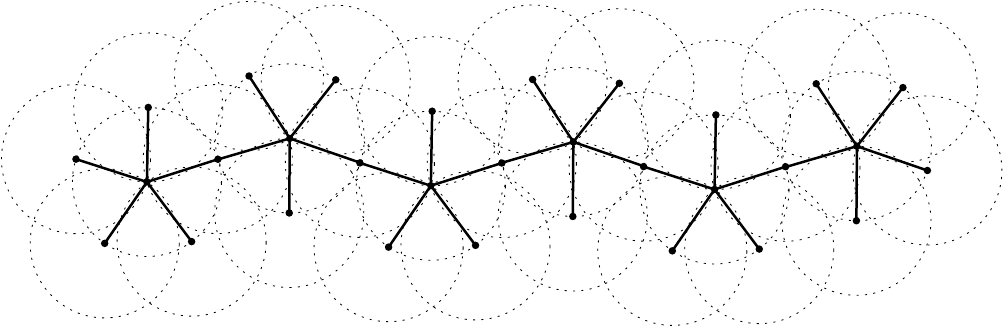}
  \caption{Unit disk graph with all edges of unit length, and maximum matching of size $\frac{n-1}{5}$.}
\label{worst-case-optimal}
\end{figure}

In Section \ref{bottleneck-five-over-two} we will show how one can use a modified version of algorithm {\scshape PlaneMatching} to compute a bottleneck plane matching of size at least $\frac{n}{5}$ with bottleneck length at most $\btopt$. Recall that $\btopt$ is the length of the bottleneck edge in the bottleneck plane perfect matching $\MOPT$. Section \ref{bottleneck-five-over-four} extends this idea to an algorithm which computes a plane matching of size at least $\frac{2n}{5}$ with edges of length at most $(\sqrt{2}+\sqrt{3})\btopt$.

\section{Approximating Bottleneck Plane Perfect Matching}
\label{bottleneck}

The general approach of our algorithms is to first compute a (possibly crossing) bottleneck perfect matching $\MC$ of $K(P)$ using the algorithm in \cite{Chang1992}. Let $\bt_{\MC}$ denote the length of the bottleneck edge in $\MC$. It is obvious that the bottleneck length of any plane perfect matching is not less than $\bt_{\MC}$. Therefore, $\btopt\ge\bt_{\MC}$. We consider a ``unit'' disk graph $DG(\bt_{\MC}, P)$ over $P$, in which there is an edge between two vertices $p$ and $q$ if $|pq|\le\bt_{\MC}$. Note that $DG(\bt_{\MC}, P)$ is not necessarily connected. Let $G_1,\dots,G_k$ be the connected components of $DG(\bt_{\MC}, P)$. For each component $G_i$, consider a minimum spanning tree $T_i$ of degree at most five. We show how to extract from $T_i$ a plane matching $M_i$ of proper size and appropriate edge lengths.

\begin{lemma}
 Each component of $DG(\bt_{\MC}, P)$ has an even number of vertices.
\end{lemma}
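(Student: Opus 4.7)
The plan is to exploit the fact that $\MC$ itself already provides a perfect matching whose edges all lie in $DG(\bt_{\MC}, P)$, and to show that no edge of $\MC$ can cross between two different connected components of this graph. Once that is established, the evenness of each component follows immediately by counting matched pairs.

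First, I would observe that by definition every edge $(p,q) \in \MC$ satisfies $|pq| \le \bt_{\MC}$, so every such edge is present in $DG(\bt_{\MC}, P)$. In particular, for each matched pair $(p,q)$ the endpoints $p$ and $q$ are adjacent in $DG(\bt_{\MC}, P)$ and therefore belong to the same connected component $G_i$.

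Next, I would fix an arbitrary component $G_i$ and consider the restriction $\MC|_{G_i}$ consisting of those edges of $\MC$ whose endpoints lie in $G_i$. By the previous paragraph, no edge of $\MC$ has one endpoint inside $G_i$ and the other outside. Combined with the fact that $\MC$ is a perfect matching of $P$, this means that every vertex of $G_i$ is covered by some edge of $\MC|_{G_i}$. Hence $\MC|_{G_i}$ is a perfect matching of the vertex set of $G_i$, which forces $|V(G_i)|$ to be even.

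There is really no substantial obstacle here; the whole argument rests on the single observation that an edge of $\MC$ cannot straddle two components of $DG(\bt_{\MC}, P)$, which is immediate from the definition of the graph. The main thing to state carefully is that $\MC$ being a perfect matching of $P$ implies a perfect matching on each component, rather than merely a matching that leaves some vertices uncovered.
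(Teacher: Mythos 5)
Your argument is correct and is essentially the paper's own proof: both rest on the observation that every edge of $\MC$ has length at most $\bt_{\MC}$, hence lies inside a single component of $DG(\bt_{\MC}, P)$, so the perfect matching $\MC$ restricts to a perfect matching on each component, forcing even cardinality. No difference in approach worth noting.
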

\begin{proof}
This follows from the facts that $\MC$ is a perfect matching and both end points of each edge in ${\MC}$ belong to the same component of $DG(\bt_{\MC}, P)$.
\end{proof}

\subsection{First Approximation Algorithm}
\label{bottleneck-five-over-two}
In this section we describe the process of computing a plane matching $M$ of size
$|M|\ge\frac{1}{5}n$ with bottleneck length at most $\btopt$. Consider the minimum spanning trees $T_1,\dots,T_k$ of the $k$ components of $DG(\bt_{\MC}, P)$. For $1\le i\le k$, let $P_i$ denote the set of vertices in $T_i$ and $n_i$ denote the number of vertices of $P_i$. Our approximation algorithm runs in two steps:

\begin{paragraph}{Step 1:} 
We start by running algorithm {\scshape PlaneMatching} on each of the point sets $P_i$. Let $M_i$ be the output. Recall that algorithm {\scshape PlaneMatching}, from Section \ref{five-over-two-approximation}, picks a leaf $v'\in T_i'$, corresponding to a vertex $v \in T_i$, matches it to one of its neighboring leaves in $T_i$ and disregards the other edges connected to $v$. According to Lemma \ref{n-minus-one-lemma}, this gives us a plane matching $M_i$ of size at least $\frac{n_i-1}{5}$. However, we are looking for a matching of size at least $\frac{n_i}{5}$. 

 The total number of edges of $T_i$ is $n_i-1$ and in each of the iterations, the algorithm picks one edge out of at most five candidates. If in at least one of the iterations of the while-loop, $v$ has degree at most four (in $T_i$), then in that iteration algorithm {\scshape PlaneMatching} picks one edge out of at most four candidates. Therefore, the size of $M_i$ satisfies 
$$
 |M_i| \ge 1 + \frac{(n_i-1)-4}{5}=\frac{n_i}{5}.
$$
 If in all the iterations of the while-loop, $v$ has degree five, we look at the angles between the consecutive leaves connected to $v$. Recall that in $MST(P)$ all the angles are greater than or equal to $\pi/3$. If in at least one of the iterations, $v$ is connected to two consecutive leaves $u_j$ and $u_{j+1}$ for $1\le j\le 3$, such that $\angle u_jvu_{j+1} = \pi/3$, we change $M_i$ as follow. Remove from $M_i$ the edge incident to $v$ and add to $M_i$ the edges $(u_j,u_{j+1})$ and $(v,u_s)$, where $u_s, s\notin\{j, j+1\}$, is one of the leaves connected to $v$. Clearly $\bigtriangleup vu_ju_{j+1}$ is equilateral and $|u_ju_{j+1}| = |u_jv|=|u_{j+1}v|\le \bt_{\MC}$, and by Lemma \ref{empty-triangle-lemma}, $(u_j, u_{j+1})$ does not cross other edges. In this case, the size of $M_i$ satisfies 
$$
 |M_i| = 2+\frac{(n_i-1)-5}{5}=\frac{n_i+4}{5}\ge\frac{n_i}{5}.
$$

\end{paragraph}

\begin{paragraph}{Step 2:}
In this step we deal with the case that in all the iterations of the while-loop, $v$ has degree five and the angle between any pair of consecutive leaves connected to $v$ is greater than $\pi/3$. Recall that $\MC$ is a perfect matching and both end points of each
edge in $\MC$ belong to the same $T_i$.

\begin{figure}[ht]
  \centering
    \includegraphics[width=0.5\textwidth]{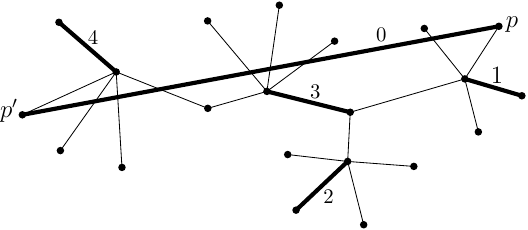}
  \caption{Resulted matching $M_i$ by modified {\scshape PlaneMatching}. The numbers show the order in which the edges (bold edges) are added to $M_i$.}
\label{two-leaves-fig}
\end{figure}

\begin{lemma}
\label{two-leaves-lemma}
In Step 2, at least two leaves of $T_i$ are matched in $\MC$.
\end{lemma}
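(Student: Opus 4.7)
The plan is to argue by contradiction, reading the lemma as the statement that $\MC$ contains at least one edge whose two endpoints are both leaves of $T_i$. Suppose otherwise. Since every edge of $\MC$ has both endpoints in the same component of $DG(\bt_{\MC}, P)$, the restriction $\MC|_{P_i}$ is a perfect matching of $P_i$; the contrary hypothesis then gives an injection from the leaf set $L(T_i)$ into the internal-vertex set $I(T_i) := P_i \setminus L(T_i)$ via the $\MC$-partner map, and hence $|L(T_i)| \le n_i/2$. I will contradict this by showing that Step 2 in fact forces $|L(T_i)| > n_i/2$.

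The driving observation is that under Step 2 each vertex $v$ picked during \textsc{PlaneMatching} has degree $5$ in the current tree, and since the algorithm never adds edges, such a $v$ had degree exactly $5$ already in $T_i$ (the MST has maximum degree $5$). Consequently the five $T_i$-neighbors of $v$ coincide with its current neighbors: the four leaves $L_v$ together with the non-leaf $T$-neighbor $w = \text{Neighbor}(v', T')$. I would then argue inductively that each vertex in every $L_v$ peeled in some iteration is also a leaf of the original $T_i$, with at most one possible exception per iteration: the only way a vertex $l$ can be a current leaf in a later tree but internal in $T_i$ is for $l$ to coincide with the vertex $w^{(k)}$ of some earlier iteration $k$, because the above structural observation pins down $v^{(k)}$'s $T_i$-neighborhood as exactly $L_{v^{(k)}} \cup \{w^{(k)}\}$.

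Summing over the $s$ iterations and accounting for the at most two residual vertices remaining after the while-loop, this yields $|L(T_i)| \ge 4s - s = 3s$ and $|I(T_i)| \le s + s + 2 = 2s + 2$. For $s \ge 3$ these bounds already force $|L(T_i)| > |I(T_i)|$, contradicting $|L(T_i)| \le n_i/2$. The small cases $s = 1$ and $s = 2$ are tightly constrained: parity ($n_i$ even) combined with the degree-$5$ and $>\pi/3$-angle hypothesis of Step 2 essentially reduces $s=1$ to the star configuration $K_{1,5}$, where $|L(T_i)| = 5 > n_i/2 = 3$, and $s = 2$ to an analogously restricted configuration, both of which can be verified directly. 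The main obstacle I foresee is the leaf-preservation bookkeeping itself---rigorously showing that only the $w^{(k)}$'s can become ``bad'' current-leaves in later iterations, and correctly handling the one or two vertices that may persist in $T$ when the while-loop terminates.
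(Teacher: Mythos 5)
Your proof is correct and follows essentially the same route as the paper: both arguments reduce the lemma to showing that the leaves of $T_i$ outnumber its internal vertices (a consequence of every processed vertex having degree five in Step 2) and then apply pigeonhole to the perfect matching $\MC$, whose edges stay inside the component. The only difference is bookkeeping: you count forward over the iterations of {\scshape PlaneMatching}, tracking the at most one ``bad'' current-leaf $w^{(k)}$ created per iteration and treating $s=1,2$ separately (these do check out --- parity forces $T_i=K_{1,5}$ for $s=1$ and a $12$-vertex tree for $s=2$; in fact, noting that at most $s-1$ of the $w^{(k)}$'s can ever be consumed as leaves removes the need for the case split), whereas the paper runs the same peeling in reverse as successive $K_{1,5}$ attachments and obtains the uniform surplus $m_i\ge m_i'+4$ directly.
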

\begin{proof}
 Let $m_i$ and $m_i'$ denote the number of external (leaves) and internal nodes of $T_i$, respectively. Clearly $m_i'$ is equal to the number of vertices of $T_i'$ and $m_i+m_i'=n_i$. Consider the reverse process in {\scshape PlaneMatching}. Start with a 5-star tree $t_i$, i.e. $t_i=K_{1,5}$, and in each iteration append a new $K_{1,5}$ to $t_i$ until $t_i=T_i$. In the first step $m_i=5$ and $m_i'=1$. In each iteration a leaf of the appended $K_{1,5}$ is identified with a leaf of $t_i$; the resulting vertex becomes an internal node. On the other hand, the \textquotedblleft center\textquotedblright of the new $K_{1,5}$ becomes an internal node of $t_i$ and its other four neighbors become leaves of $t_i$. So in each iteration, the number of leaves $m_i$ increases by three, and the number of internal nodes $m_i'$ increases by two. Hence, in all iterations (including the first step) we have $m_i\ge m'_i+4$.

Again consider $\MC$. In the worst case if all $m_i'$ internal vertices of $T_i$ are matched to leaves, we still have four leaves which have to be matched together.   
\end{proof}

According to Lemma \ref{two-leaves-lemma} there is an edge $(p,p')\in \MC$ where $p$ and $p'$ are leaves in $T_i$. We can find $p$ and $p'$ for all $T_i$'s by checking all the edges of $\MC$ once. We remove all the edges of $M_i$ and initialize $M=\{(p,p')\}$. Again we run a modified version of {\scshape PlaneMatching} in such a way that in each iteration, in line 7 it selects the leaf $u_i$ adjacent to $v$ such that $(v, u_i)$ is not intersected by $(p,p')$. In each iteration $v$ has degree five and is connected to at least four leaf edges with angles greater than $\pi/3$. Thus, $(p,p')$ can intersect at most three of the leaf edges and such kind of $(v, u_i)$ exists. See Figure \ref{two-leaves-fig}. In this case, $M_i$ has size
$$
 |M_i| = 1+\frac{n_i-1}{5}=\frac{n_i+4}{5}\ge\frac{n_i}{5}.
$$

\end{paragraph}

We run the above algorithm on each $T_i$ and for each of them we compute a plane matching $M_i$. The final matching of point set $P$ will be $M=\bigcup_{i=1}^{k} M_i$.

\begin{theorem}
\label{two-over-five-theorem}
Let $P$ be a set of $n$ points in the plane, where $n$ is even, and let $\btopt$ be the minimum bottleneck length of any plane perfect matching of $P$. In $O(n^{1.5}\sqrt{\log n})$ time, a plane matching of $P$ of size at least $\frac{n}{5}$ can be computed, whose bottleneck length is at most $\btopt$.
\end{theorem}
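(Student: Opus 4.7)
The plan is to combine the ingredients already set up in this section. First I would invoke the algorithm of Chang et al.\ \cite{Chang1992} to obtain, in $O(n^{1.5}\sqrt{\log n})$ time, a (possibly crossing) perfect bottleneck matching $\MC$ of $K(P)$. Since every perfect matching (in particular the plane matching $\MOPT$) has bottleneck at least $\bt_{\MC}$, we have $\bt_{\MC}\le\btopt$. I would then work in the disk graph $DG(\bt_{\MC},P)$; by the lemma preceding this section, each of its connected components $G_1,\dots,G_k$ contains an even number of vertices.

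For each component $G_i$ with vertex set $P_i$ of size $n_i$, I would compute, using Corollary \ref{MST-DEL} together with the Monma et al.\ construction, a minimum spanning tree $T_i$ of maximum degree five. The two-step procedure of this section is then applied to $T_i$. Step 1 already produces $|M_i|\ge n_i/5$ whenever some chosen vertex $v$ either has degree at most four in $T_i$ or has two consecutive leaves at angle exactly $\pi/3$; in the second case the swap involves an edge of the equilateral triangle $\bigtriangleup vu_ju_{j+1}$, which by Lemma \ref{empty-triangle-lemma} is empty and therefore non-crossing. Step 2 handles the remaining case in which every chosen $v$ has degree five and consecutive-leaf angles are strictly greater than $\pi/3$: Lemma \ref{two-leaves-lemma} supplies an anchor edge $(p,p')\in\MC$ whose endpoints are leaves of $T_i$, and the modified {\scshape PlaneMatching} is run so that at each step the selected leaf-edge $(v,u_i)$ avoids crossing $(p,p')$. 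Such a choice always exists by a pigeonhole argument, since among five leaf-edges whose pairwise angles at $v$ exceed $\pi/3$ the single segment $(p,p')$ can cross at most three. In all cases $|M_i|\ge n_i/5$ and every edge of $M_i$ has length at most $\bt_{\MC}\le\btopt$.

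I would then take $M=\bigcup_{i=1}^{k} M_i$. Summing, $|M|\ge \sum_i n_i/5=n/5$, and every edge has length at most $\btopt$. Within each $G_i$ the plane property comes from the per-component construction; across components it follows from Corollary \ref{non-crossing-edges}. The total running time is dominated by Chang's algorithm, namely $O(n^{1.5}\sqrt{\log n})$, since the computation of all the MSTs and the per-component extraction together sum to only $O(n\log n)$.

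The step that requires most care is the non-crossing guarantee in Step 2: verifying that the anchor $(p,p')$ really leaves room, at every degree-five vertex encountered during the modified {\scshape PlaneMatching}, for an uncrossed leaf-edge. This reduces to the angular pigeonhole observation above together with the counting argument of Lemma \ref{two-leaves-lemma} on leaves versus internal nodes of a tree assembled from copies of $K_{1,5}$, so the correctness ultimately rests on two somewhat delicate combinatorial facts rather than on a single global invariant.
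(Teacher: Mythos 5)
Your proposal is correct and follows essentially the same route as the paper's own proof: Chang et al.\ for $\MC$, the even-component disk graph $DG(\bt_{\MC},P)$, degree-five spanning trees per component, the Step 1/Step 2 case analysis with the equilateral-triangle swap and the anchor edge $(p,p')$ from Lemma \ref{two-leaves-lemma}, and planarity across components via Corollary \ref{non-crossing-edges}. No substantive differences or gaps to report.
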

\begin{proof}

{\em Proof of edge length}: Let $\bt_{\MC}$ be the length of the longest edge in $\MC$ and consider a component $G_i$ of $DG(\bt_{\MC}, P)$. All the selected edges in Steps 1 and 2 belong to $T_i$ except $(u_j,u_{j+1})$ and $(p,p')$. $T_i$ is a subgraph of $G_i$, and the edge $(u_j,u_{j+1})$ belongs to $G_i$, and the edge $(p,p')$ belongs to $\MC$ (which belongs to $G_i$ as well). So all the selected edges belong to $G_i$, and $\bt_{M_i}\le \bt_{\MC}$. Since $\bt_{\MC}\le \btopt$, we have $\bt_{M_i}\le \bt_{\MC}\le \btopt$ for all $i$, $1\le i \le k$.

 {\em Proof of planarity}: The edges of $M_i$ belong to the minimum spanning forest of $DG(\bt_{\MC}, P)$ which is plane, except $(u_j,u_{j+1})$ and $(p,p')$. According to Corollary \ref{non-crossing-edges} and Lemma \ref{empty-triangle-lemma} the edge $(u_j,u_{j+1})$ does not cross the edges of the minimum spanning forest. In Step 2 we select edges of $T_i$ in such a way that avoid $(p,p')$. Note that $(p,p')$ belongs to the component $G_i$ and by Corollary \ref{non-crossing-edges} it does not cross any edge of the other components of $DG(\bt_{\MC}, P)$. So $M$ is plane.

 {\em Proof of matching size}: Since $M=M_1\cup \dots \cup M_k$, and for each $1\le i\le k$, $|M_i|\ge \frac{n_i}{5}$, hence
$$
 |M|\ge \sum_{i=1}^{k} |M_i| \ge \sum_{i=1}^{k} \frac{n_i}{5} = \frac{n}{5}.
$$

 {\em Proof of complexity}: The initial matching $\MC$ can be computed in time $O(n^{1.5}\sqrt{\log n})$ by using the algorithm of Chang et al. \cite{Chang1992}. By Lemma \ref{n-minus-one-lemma} algorithm {\scshape PlaneMatching} runs in $O(n \log n)$ time. In Step 1 we spend constant time for checking the angles and the number of leaves connected to $v$ during the while-loop. In Step 2, the matched leaves $p$ and $p'$ can be computed in $O(n)$ time for all $T_i$'s by checking all the edges of $\MC$ before running the algorithm again. So the modified {\scshape PlaneMatching} still runs in $O(n \log n)$ time, and the total running time of our method is $O(n^{1.5}\sqrt{\log n})$.
\end{proof}

Since the running time of the algorithm is bounded by the time of computing the initial bottleneck matching $\MC$, any improvement in computing $\MC$ leads to a faster algorithm for computing a plane matching  $M$. In the next section we improve the running time.

\subsubsection{Improving the Running Time}
In this section we present an algorithm that improves the running time to $O(n\log^2 n)$. We first compute a forest $F$, such that each edge in $F$ is of length at most $\btopt$ and, for each tree $T \in F$, we have a leaf $p \in T$ and a point $p' \in T$ such that $|pp'| \le \btopt$. Once we have this forest $F$, we apply Step 1 and Step 2 on $F$ to obtain the matching $M$ as in the previous section. 

Let $MST(P)$ be a (five-degree) minimum spanning tree of $P$. 
Let $F_{\lambda}=\{ T_1, T_2, \dots, T_k\}$ be the forest obtained from $MST(P)$ by removing all the edges whose length is greater than $\lambda$, i.e., $F_{\lambda} = \{e \in MST(P): |e| \le \lambda\}$. For a point $p \in P$, let $cl(p, P)$ be the point in $P$ that is closest to $p$.

\begin{lemma}
\label{opt-forest}
For all $T \in F_{\btopt}$, it holds that
%Let $T_1, T_2, \dots, T_k$ be the trees in $F_{\lambda^*}$.  Then, 
\begin{enumerate}
	\item[(i)]  the number of points in $T$ is even, and
	\item[(ii)]  for each two leaves $p,q \in T$ that are incident to the same node $v$ in $T$, let $p' = cl(p, P\setminus \{v\})$, let $q' = cl(q, P\setminus \{v\})$, and assume that $|pp'| \le |qq'|$. Then, $\btopt \geq |pp'|$ and $p'$ belongs to $T$.
	%let $r= \min \{|pp'|,|qq'|\}$. Then, $\btopt \geq r$, and one of $p'$ and $q'$ that minimizes $r$ belongs to $T$.
\end{enumerate}
\end{lemma}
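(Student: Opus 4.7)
The plan is to exploit a single structural fact from which both parts follow: the trees of $F_{\btopt}$ coincide, as vertex sets, with the connected components of the threshold graph $G_{\btopt}$ on $P$ whose edges are the pairs of distance at most $\btopt$. The inclusion ``$T$ is contained in a component of $G_{\btopt}$'' is immediate since every edge of $F_{\btopt}$ has length at most $\btopt$. For the reverse inclusion I would invoke the cut property of the MST: any edge $e=(a,b)\in MST(P)$ with $|e|>\btopt$, when removed from $MST(P)$, splits the tree into two subtrees $A,B$ and is the shortest edge across the cut $(A,B)$; hence every edge of $K(P)$ crossing this cut has length strictly greater than $\btopt$, so no edge of $G_{\btopt}$ crosses between $A$ and $B$. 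Iterating this over all MST edges longer than $\btopt$ yields that the trees of $F_{\btopt}$ are exactly the components of $G_{\btopt}$.

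For part (i) the conclusion is then immediate. Every edge of $\MOPT$ has length at most $\btopt$, hence lies in $G_{\btopt}$, so both endpoints of each matched pair belong to the same component and therefore to the same tree $T\in F_{\btopt}$. Since $\MOPT$ is perfect, each such $T$ is saturated by $\MOPT$, forcing $|V(T)|$ to be even.

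For part (ii), the structural fact implies that $\MOPT$ restricted to $V(T)$ is itself a perfect matching of $V(T)$ with bottleneck at most $\btopt$. I split on how $\MOPT$ matches the common neighbor $v$. If $\MOPT$ pairs $v$ with $p$, then $q$ must be matched to some $q^*\in V(T)\setminus\{v,q\}\subseteq P\setminus\{v\}$ with $|qq^*|\le\btopt$, and the defining property of $q'$ together with the hypothesis $|pp'|\le|qq'|$ yields $|pp'|\le|qq'|\le|qq^*|\le\btopt$. Otherwise $\MOPT$ pairs $p$ with some $p^*\in V(T)\setminus\{v,p\}\subseteq P\setminus\{v\}$ with $|pp^*|\le\btopt$, so $|pp'|\le|pp^*|\le\btopt$. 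Either way $|pp'|\le\btopt$, which means $(p,p')$ is an edge of $G_{\btopt}$; since $p\in T$ and $T$ is a component of $G_{\btopt}$, this forces $p'\in T$.

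The main obstacle is really just the structural identification of trees of $F_{\btopt}$ with components of $G_{\btopt}$, which rests on the MST cut-property argument in the first paragraph; once that is established, both parts reduce to short combinatorial arguments about how $\MOPT$ covers $V(T)$.
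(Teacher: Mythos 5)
Your proof is correct and follows essentially the same route as the paper: both arguments rest on the fact that any two trees of $F_{\btopt}$ are at distance greater than $\btopt$ (equivalently, the trees are the components of the graph on $P$ with edges of length at most $\btopt$), and then analyze how the optimal matching can cover $v$, $p$, and $q$ inside $T$. The only difference is that you justify this separation fact explicitly via the cut property of $MST(P)$, whereas the paper simply asserts it; your case analysis for part (ii) matches the paper's "at most one of $p$ and $q$ is matched to $v$" argument.
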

\begin{proof}
(i) Suppose that $T$ has odd number of points. Thus in $M^*$ one of the points in $T$ should be matched to a point in a tree $T' \neq T$ by an edge $e$. Since $e \notin F_{\btopt}$, we have $|e|>\btopt$, which contradicts that $\btopt$ is the minimum bottleneck. (ii) Note that $v$ is the closest point to both $p$ and $q$. In $M^*$, at most one of $p$ and $q$ can be matched to $v$, and the other one must be matched to a point which is at least as far as its second closest point. Thus, $\btopt$ is at least $|pp'|$. The distance between any two trees in $F_{\btopt}$ is greater than $\btopt$. Now if $p'$ is not in $T$, then in any bottleneck perfect matching, either $p$ or $q$ is matched to a point of distance greater than $\btopt$, which contradicts that $\btopt$ is the minimum bottleneck.
\end{proof}

Let $E=(e_1, e_2, \dots ,e_{n-1})$ be the edges of $MST(P)$ in sorted order of their lengths. Our algorithm performs a binary search on $E$, and for each considered edge $e_i$, it uses Algorithm~\ref{proc2} to decide whether $\lambda < \btopt$, where $\lambda=|e_i|$. 
The algorithm constructs the forest $F_{\lambda}$, and for each tree $T$ in $F_{\lambda}$, it picks two leaves $p$ and $q$ from $T$ and finds their second closest points $p'$ and $q'$. Assume w.l.o.g. that $|pp'| \le |qq'|$. Then, the algorithm returns FALSE if $p'$ does not belong to $T$. By Lemma~\ref{opt-forest}, if the algorithm returns FALSE, then we know that $\lambda <\btopt$. 

Let $e_j$ be the shortest edge in $MST(P)$, for which Algorithm~\ref{proc2} does not return FALSE. This means that Algorithm~\ref{proc2} returns FALSE for $e_{j-1}$, and there is a tree $T$ in $F_{|e_{j-1}|}$ and a leaf $p$ in $T$, such that $|pp'|\ge |e_j|$. Thus $|e_j|\leq\btopt$ and for each tree $T$ in the forest $F_{|e_j|}$, we stored a leaf $p$ of $T$ and a point $p' \in T$, such that $|pp'| \le \btopt$. Since each tree in $F_{|e_j|}$ is a subtree of $MST(P)$, $F_{|e_j|}$ is planar and each tree in $F_{|e_j|}$ is of degree at most five. 

\floatname{algorithm}{Algorithm}

\begin{algorithm}                      % enter the algorithm environment
\caption{{\scshape CompareToOpt}$(\lambda)$}          % give the algorithm a caption
\label{proc2} 
%\require{set $P$ of $n$ points in the plane, such that $UDG(P)$ is connected.}\\
%\ensure{plane matching $M$ of $MST(P)$ with $|M|\ge\frac{n-1}{5}$.}
\begin{algorithmic}[1]

  \State compute $F_{\lambda}$ %=\{ T_1, T_2, \dots, T_k\}$
  \State {$L\gets$ empty list}
  \For {each $T \in F_{\lambda}$}
	\If {$T$ has an odd number of points} 
		\State return {FALSE} \EndIf
	\If {there exist two leaves $p$ and $q$ incident to a node $v\in T$}
  \State $p' \gets cl(p, P\setminus \{v\} )$
	\State $q' \gets cl(q, P\setminus \{v\} )$
	
  \If {$|pp'| \le |qq'|$}
			\If {$p'$ does not belong to $T$} 
				\State \Return {FALSE}
			\Else
				\State {add the triple $(p,p',T)$ to $L$}  \EndIf
  \Else 
			\If {$q'$ does not belong to $T$} 
				\State \Return {FALSE}
			\Else
				\State {add the triple $(q,q',T)$ to $L$}  \EndIf
  \EndIf \EndIf
	\EndFor
\State \Return $L$
	
\end{algorithmic}
\end{algorithm}

Now we can apply Step 1 and Step 2 on $F_{|e_j|}$ as in the previous section. Note that in Step 2, for each tree $T_i$ we have a pair $(p,p')$ (or $(q,q')$) in the list $L$ which can be matched. In Step 2, in each iteration $v$ has degree five, thus, $p'$ should be a vertex
of degree two or a leaf in $T_i$. If $p'$ is a leaf we run the modified version of {\scshape PlaneMatching} as in the previous section. If $p'$ has degree two, we remove all the edges of $M_i$ and initialize $M_i = {(p, p')}$. Then remove $p'$ from $T_i$ and run {\scshape PlaneMatching} on the resulted subtrees. Finally, set $M=\bigcup_{i=1}^{k}M_i$.

\begin{lemma}
\label{planarity}
 The matching $M$ is planar.
\end{lemma}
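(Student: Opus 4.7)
My plan is to classify every edge of $M=\bigcup_{i=1}^{k}M_i$ according to how it was produced, and then rule out crossings both within a single $M_i$ and between distinct $M_i$, $M_j$. Each edge of $M$ is of one of three types: (a) an edge of some tree $T_i\subseteq MST(P)$ selected by (modified) {\scshape PlaneMatching}; (b) an equilateral-triangle edge $(u_j,u_{j+1})$ introduced in Step 1; or (c) the pair edge $(p,p')$ added in Step 2. A useful preliminary observation is that every edge of $M$ has length at most $\btopt$: types (a) and (b) have length at most $|e_j|\le\btopt$, while for type (c) Lemma \ref{opt-forest}(ii) gives $|pp'|\le\btopt$.

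For crossings inside one $M_i$, I would adapt the argument from Theorem \ref{two-over-five-theorem}. Type-(a) edges all lie in $MST(P)$, so they are pairwise non-crossing. For a type-(b) edge $(u_j,u_{j+1})$, Lemma \ref{empty-triangle-lemma} guarantees that the triangle $\triangle vu_ju_{j+1}$ is empty of points of $P$, so no other edge of $M_i$ can cross it. The type-(c) edge $(p,p')$ is handled in two sub-cases: when $p'$ is a leaf of $T_i$, the modified {\scshape PlaneMatching} explicitly avoids selecting edges that cross $(p,p')$, exactly as in the first algorithm; when $p'$ has degree two, the algorithm first deletes $p'$ from $T_i$ and only then runs {\scshape PlaneMatching} on the two resulting sub-trees, so no type-(a) edge of $M_i$ is incident to $p'$. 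In both sub-cases planarity inside $M_i$ follows by inspection, using also the empty-hull property around $v$ guaranteed by Corollary \ref{empty-convex-hull}.

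For crossings between edges of different trees, the main tool is Corollary \ref{non-crossing-edges}. By Corollary \ref{MST-DEL}, $F_{|e_j|}$ is exactly the minimum spanning forest of $DG(|e_j|,P)$, so the trees $T_1,\dots,T_k$ are precisely the connected components of $DG(|e_j|,P)$. Therefore any two edges of length at most $|e_j|$ that lie in different trees cannot cross, which immediately takes care of all inter-tree pairs involving only types (a) and (b).

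The main obstacle is to rule out a crossing between a type-(c) edge $(p,p')\in M_i$ and an edge of $M_j$ with $j\ne i$, because $|pp'|$ may exceed $|e_j|$ and so $(p,p')$ need not belong to $DG(|e_j|,P)$; the direct application of Corollary \ref{non-crossing-edges} therefore breaks down. To deal with this I would argue by contradiction: if $(p,p')$ crossed an edge $(x,y)$ of $M_j$, then since both segments have length at most $\btopt$, Lemma \ref{same-component} applied to $DG(\btopt,P)$ places all four endpoints in the same component of $DG(\btopt,P)$. I would then combine the choice of $p'$ as the second-nearest neighbor of $p$ (Procedure \ref{proc2} and Lemma \ref{opt-forest}(ii)) with the hypothesis that $T_i$ and $T_j$ are distinct trees of $F_{|e_j|}$ to produce a shorter MST-candidate edge bridging $T_i$ and $T_j$, contradicting the fact that $F_{|e_j|}$ is obtained from $MST(P)$ by removing only edges of length greater than $|e_j|$. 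Carrying out this last geometric/combinatorial step cleanly is the delicate part of the argument.
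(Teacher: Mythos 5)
Your setup mirrors the paper's: classify the edges of $M$ by how they were produced, dispose of pairs of edges of $F_{|e_j|}$ (they lie in $MST(P)$), keep the equilateral replacement edge inside its empty triangle, and use Corollaries \ref{MST-DEL} and \ref{non-crossing-edges} to separate inter-tree pairs of edges of length at most $|e_j|$. But the entire content of Lemma \ref{planarity} is the case you leave open: a crossing involving a pair edge $(p,p')\notin F_{|e_j|}$, either with a forest edge (possibly of another tree) or with another pair edge. You flag this as ``the delicate part'' and only sketch a route, and the sketch does not go through as stated. Applying Lemma \ref{same-component} to $DG(\btopt,P)$ merely places the four endpoints in one component of $DG(\btopt,P)$, which is uninformative here because a single such component can contain many trees of $F_{|e_j|}$; and a crossing does not produce a bridging edge of length at most $|e_j|$ between the two trees, so you cannot contradict the construction of $F_{|e_j|}$ in the way you propose---the four distances between $\{p,p'\}$ and the other tree are only known to exceed $|e_j|$ and may be as large as $\btopt$, which is entirely compatible with a crossing when $|pp'|$ is itself close to $\btopt$. (Your same-tree discussion is also thinner than it should be: in the degree-two subcase ``by inspection'' hides exactly the same difficulty, since plain {\scshape PlaneMatching} does not explicitly avoid $(p,p')$ there.)

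The missing idea, which is what the paper's proof supplies, is to play the crossing inequalities for the diagonals of the convex quadrilateral against the definition of $p'$ as $cl(p,P\setminus\{v\})$. If two pair edges $(p,p')$ and $(q,q')$ cross, then $\min\{|pq|,|pq'|\}<|pp'|$ or $\min\{|qp|,|qp'|\}<|qq'|$; since $q$ and $q'$ lie in a different tree they cannot be the excluded neighbour $v$ of $p$, so this contradicts the choice of $p'$ (or symmetrically of $q'$). If $(p,p')$ crosses a forest edge $(q,q')$ of its own tree, this is ruled out by the Step-2 selection rule; if $(q,q')$ belongs to a different tree, then $|p'q|,|p'q'|>|e_j|\ge |qq'|$ combined with the crossing inequalities $|pq|+|p'q'|<|pp'|+|qq'|$ and $|pq'|+|p'q|<|pp'|+|qq'|$ gives $|pq|<|pp'|$ and $|pq'|<|pp'|$, again contradicting the choice of $p'$ (the paper's extended version argues the same case by showing $|qp'|\le|qq'|$ and $|q'p'|\le|qq'|$, so Kruskal would never have added $(q,q')$ to $MST(P)$). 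Until you carry out an argument of this kind, your proof has a genuine gap precisely where the lemma is nontrivial.
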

\begin{proof}
Consider two edges $e=(p,p')$ and $e'=(q,q')$ in $M$. We distinguish between four cases:
\begin{enumerate}
   \item $e\in F_{|e_j|}$ and $e' \in F_{|e_j|}$. In this case, both $e$ and $e'$ belong to $MST(P)$ and hence they do not cross each other.
  \item $e\notin F_{|e_j|}$ and $e'\in F_{|e_j|}$. If $e$ and $e'$ cross each other, then this contradicts the selection of $(q,q')$ in Step 2 (which prevents $(p,p')$).
  \item $e\in F_{|e_j|}$ and $e'\notin F_{|e_j|}$. It leads to a contradiction as in the previous case.
  \item $e\notin F_{|e_j|}$ and $e'\notin F_{|e_j|}$. If $e$ and $e'$ cross each other, then either $\min\{|pq|,|pq'|\} < |pp'|$ or $\min\{|qp|,|qp'|\} < |qq'|$, which contradicts the selection of $p'$ or $q'$. Note that $p$ cannot be the second closest point to $q$, because $p$ and $q$ are in different trees. 
\end{enumerate} 
\end{proof}

\begin{lemma}
\label{running-time}
 The matching $M$ can be computed in $O(n\log^2 n)$ time.
\end{lemma}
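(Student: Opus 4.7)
The plan is to split the algorithm into three stages and show that each fits within the $O(n\log^2 n)$ budget: a one-time $O(n\log n)$ preprocessing, a binary search with $O(\log n)$ iterations each costing $O(n\log n)$, and an $O(n\log n)$ post-processing that applies Steps 1 and 2 to the selected forest.

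For the preprocessing, I would compute a minimum spanning tree of $P$ of degree at most five (Monma et al.\ together with Lemma~\ref{MST-UDG}) and sort its $n-1$ edges by length, both in $O(n\log n)$ time. In the same budget I would build a nearest-neighbor search structure on $P$---for instance, a Voronoi diagram augmented with a point-location data structure---so that ``nearest point in $P\setminus\{v\}$'' queries can be answered in $O(\log n)$ time each. A clean alternative is to precompute the two nearest neighbors of every point once, also in $O(n\log n)$ time, and to observe that $cl(p, P\setminus\{v\})$ equals the first nearest neighbor when $v$ differs from it, and the second otherwise.

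Next I would analyze a single call of {\scshape CompareToOpt}$(\lambda)$. The forest $F_\lambda$ is obtained from the MST by discarding edges of length exceeding $\lambda$ and running BFS/DFS to label connected components, which is $O(n)$. For each tree $T\in F_\lambda$, the parity test and the search for two leaves incident to a common vertex cost $O(|T|)$, summing to $O(n)$ across all trees; checking whether $p'$ (or $q'$) belongs to $T$ is $O(1)$ if every point is tagged with its component during the BFS/DFS. The only non-trivial ingredient is the $O(n)$ queries $cl(\cdot, P\setminus\{v\})$, each of which costs $O(\log n)$, giving $O(n\log n)$ per call. The binary search therefore contributes $O(\log n)\cdot O(n\log n)=O(n\log^2 n)$.

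Finally, once the critical index $j$ has been located, $F_{|e_j|}$ is a planar forest of maximum degree five on which the modified {\scshape PlaneMatching} from Section~\ref{bottleneck-five-over-two} runs in $O(n\log n)$ time by Lemma~\ref{n-minus-one-lemma}. Summing the three stages yields $O(n\log^2 n)$ overall. The main obstacle, or at least the step that I would verify most carefully, is the accounting inside {\scshape CompareToOpt}: one has to make sure that the exclusion point $v$ (which varies from leaf to leaf and from one iteration of the binary search to the next) is handled without a fresh $\Omega(n)$ rebuild of the nearest-neighbor structure per query, which is exactly what the preprocessing above is designed to avoid.
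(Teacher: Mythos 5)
Your proposal is correct and follows essentially the same route as the paper: binary search over the sorted MST edges, an $O(n)$ forest construction per iteration, and closest-point queries answered from a precomputed structure, with the final modified {\scshape PlaneMatching} run adding only $O(n\log n)$. Your ``two nearest neighbors of every point'' precomputation is exactly the role played by the paper's second-order Voronoi diagram with point location (and is the right fix, since a plain first-order Voronoi structure alone cannot exclude $v$; here the exclusion of the single point $v$ means the answer is always among the top two neighbors).
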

\begin{proof}
Computing $MST(P)$ and sorting its edges take $O(n\log{n})$~\cite{deBerg08}. Since we performed a binary search on the edges of $MST(P)$, we need $\log{n}$ iterations. In each iteration, for an edge $e_i$, we compute the forest $F_{|e_i|}$ in $O(n)$ and the number of the trees in the forest can be $O(n)$ in the worst case. We compute in advance the second order Voronoi diagram of the points together with a corresponding point location data structure, in $O(n\log{n})$~\cite{deBerg08}. For each tree in the forest, we perform a point location query to find the closest points $p'$ and $q'$, which takes $O(\log{n})$ for each query. Therefore the total running time is $O(n\log^2{n})$.
\end{proof}

\begin{theorem}
\label{two-over-five-theorem-2}
Let $P$ be a set of $n$ points in the plane, where $n$ is even, and let $\btopt$ be the minimum bottleneck length of any plane perfect matching of $P$. In $O(n\log^2 n)$ time, a plane matching of $P$ of size at least $\frac{n}{5}$ can be computed, whose bottleneck length is at most $\btopt$.
\end{theorem}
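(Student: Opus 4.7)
The plan is to assemble the proof by combining the preprocessing stage (binary search on the sorted $MST(P)$ edges) with the structural machinery of Step~1 and Step~2 from the previous section, and then verify the four standard properties: edge length, planarity, matching size, and complexity.

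First, I would invoke the binary search on the sorted edge list $E=(e_1,\dots,e_{n-1})$ of $MST(P)$, using Procedure~\ref{proc2} at each probe to detect whether the current length $\lambda$ falls below $\btopt$. By Lemma~\ref{opt-forest}, if \textsc{CompareToOpt}$(\lambda)$ returns \textsc{False}, then $\lambda<\btopt$; therefore, letting $e_j$ be the shortest edge for which the procedure returns a valid list $L$, we obtain $|e_j|\le\btopt$ and, simultaneously, for every tree $T_i\in F_{|e_j|}$ a distinguished triple $(p,p',T_i)\in L$ where $p$ is a leaf of $T_i$, $p'\in T_i$, and $|pp'|\le\btopt$. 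Since $F_{|e_j|}\subseteq MST(P)$ computed with Monma et al.'s degree-five construction, every tree $T_i$ is planar and of maximum degree five, so the hypotheses needed to run the modified \textsc{PlaneMatching} are met.

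Next, I would apply Step~1 of Section~\ref{bottleneck-five-over-two} to each $T_i$, which already yields $|M_i|\ge n_i/5$ unless every leaf $v'$ of the skeleton tree corresponds to a degree-five vertex $v$ whose consecutive leaf-angles all exceed $\pi/3$. In that residual case, instead of invoking Lemma~\ref{two-leaves-lemma}, I would use the stored pair $(p,p')$ from $L$ to seed $M_i=\{(p,p')\}$ and then run the modified \textsc{PlaneMatching} that avoids crossings with $(p,p')$. Here $p'$ is either a leaf of $T_i$ (handled exactly as in Step~2) or a degree-two internal vertex, in which case I remove $p'$ from $T_i$ and recurse into the two resulting subtrees; in either case the counting argument gives $|M_i|\ge (n_i-1)/5+1\ge n_i/5$, so summing over components yields $|M|\ge n/5$.

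For the bottleneck guarantee, every edge of $M$ either belongs to $F_{|e_j|}$ (hence has length $\le|e_j|\le\btopt$), or equals an inserted triangle edge $(u_j,u_{j+1})$ bounded by the equilateral property of $MST(P)$, or equals a pair $(p,p')$ with $|pp'|\le\btopt$ by construction of $L$. Planarity is precisely Lemma~\ref{planarity}, whose three-case analysis already covers all interactions between MST edges, the seed edges $(p,p')$, and edges across different trees. Finally, the running time follows from Lemma~\ref{running-time}: the MST and its sort cost $O(n\log n)$, the binary search contributes an $O(\log n)$ factor, and each probe costs $O(n\log n)$ dominated by the point-location queries in the precomputed second-order Voronoi diagram, giving $O(n\log^2 n)$ overall.

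The main obstacle is not any single computation but the conceptual step of certifying that the seed pair $(p,p')$ supplied by Procedure~\ref{proc2} is structurally compatible with the modified \textsc{PlaneMatching} in all configurations of the tree $T_i$; in particular the case where $p'$ is an internal degree-two vertex must be handled by splitting $T_i$ and reapplying the algorithm, and one must check that this split preserves both the size lower bound and planarity. Once this bookkeeping is in place, the theorem follows directly by combining the preceding lemmas.
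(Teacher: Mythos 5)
Your proposal is correct and follows essentially the same route as the paper: the theorem has no separate proof there, but is exactly the assembly you describe — binary search over the sorted $MST(P)$ edges with Procedure~\ref{proc2} and Lemma~\ref{opt-forest} to obtain $F_{|e_j|}$ with $|e_j|\le\btopt$ and a stored pair $(p,p')$ per tree, then Steps 1 and 2 with the stored pair replacing Lemma~\ref{two-leaves-lemma} (splitting at $p'$ when it has degree two), with planarity from Lemma~\ref{planarity} and the $O(n\log^2 n)$ bound from Lemma~\ref{running-time}. Your brief gloss that FALSE at $e_{j-1}$ yields $|e_j|\le\btopt$ implicitly uses that $F_\lambda$ is unchanged for $\lambda\in[|e_{j-1}|,|e_j|)$, which is the same (terse) justification the paper gives.
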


\subsection{Second Approximation Algorithm}
\label{bottleneck-five-over-four}

In this section we present another approximation algorithm which gives a plane matching $M$ of size $|M|\ge\frac{2}{5}n$ with bottleneck length $\bt_M\le(\sqrt{2}+\sqrt{3})\btopt$. Let $DT(P)$ denote the Delaunay triangulation of $P$. Let the edges of $DT(P)$ be, in sorted order of their lengths, $e_1, e_2, \dots$. Initialize a forest $F$ consisting of $n$ tress, each one being a single node for one point of $P$. Run Kruskal's algorithm on the edges of $DT(P)$ and terminate as soon as every tree in $F$ has an even number of nodes. Let $e_l$ be the last edge that is added by Kruskal's algorithm. Observe that $e_l$ is the longest edge in $F$. Denote the trees in $F$ by $T_1 ,\dots, T_k$ and for $1\le i \le k$, let $P_i$ be the vertex set of $T_i$ and let $n_i=|P_i|$. 

\begin{lemma}
\label{longest-edge}
 $\btopt \ge |e_l|.$
\end{lemma}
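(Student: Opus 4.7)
The plan is to exhibit an edge of $\MOPT$ whose length is at least $|e_l|$, by combining the Kruskal termination condition with the cut property of minimum spanning trees.

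First I would examine $F^{-}:=F\setminus\{e_l\}$, the forest just before $e_l$ is added. Because the algorithm did not stop at $F^{-}$, at least one component of $F^{-}$ has odd cardinality. The edge $e_l$ joins two components $T_a$ and $T_b$ of $F^{-}$, and a short parity argument forces both $T_a$ and $T_b$ to be odd: every other component is unchanged from $F^{-}$ to $F$ and so is even in both forests; if $T_a$ and $T_b$ were both even then all of $F^{-}$ would already be even and we would have stopped earlier. Since $|T_a|$ is odd and $\MOPT$ is a perfect matching of $P$, there must exist an edge $(p,q)\in\MOPT$ with $p\in T_a$ and $q\in P\setminus T_a$.

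It then suffices to prove the cut inequality $|pq|\ge|e_l|$, which is the heart of the argument. I would establish the stronger claim that every edge of $K(P)$ across the cut $(T_a,P\setminus T_a)$ has length at least $|e_l|$. Let $e^{*}$ be the shortest such crossing edge. By the classical cut property, $e^{*}\in MST(P)$, and since $MST(P)\subseteq DT(P)$, the edge $e^{*}$ was among those processed by Kruskal's algorithm on $DT(P)$. Suppose for contradiction that $|e^{*}|<|e_l|$. Then $e^{*}$ was processed strictly before $e_l$; at that moment the component containing its $T_a$-endpoint was a subset of $T_a$ and its other endpoint lay outside $T_a$, so $e^{*}$ would have been added to the growing forest, contradicting the fact that $T_a$ is a component of $F^{-}$. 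Hence $|e^{*}|\ge|e_l|$, and consequently $|pq|\ge|e^{*}|\ge|e_l|$.

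Combining the two pieces yields $\btopt\ge|pq|\ge|e_l|$, as required. The main obstacle is the cut inequality in the previous paragraph; the rest is bookkeeping about the parity of components and the sorted order in which Kruskal processes edges of $DT(P)$. Invoking the cut property via the chain $MST(P)\subseteq DT(P)$ is what makes this bookkeeping go through cleanly, since it guarantees the hypothetical shorter crossing edge $e^{*}$ actually lives in the graph on which Kruskal is run.
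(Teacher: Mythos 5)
Your argument is correct and is essentially the paper's own proof: both identify the odd-cardinality component created just before $e_l$ is added, use the fact that a perfect matching must contain an edge crossing an odd cut, and bound that crossing edge from below by $|e_l|$ via the Kruskal/cut property. You merely spell out two steps the paper states tersely (the parity argument and the justification that the shortest crossing edge lies in $MST(P)\subseteq DT(P)$), so there is no substantive difference.
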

\begin{proof}
Let $i$ be such that $e_l$ is an edge in $T_i$. Let $T'_i$ and $T''_i$ be the trees obtained by removing $e_l$ from $T_i$. Let $P'_i$ be the vertex set of $T'_i$. Then $|e_l|=\min \{|pq|: p \in P'_i, q\in P\setminus P'_i\}$. Consider the optimal matching $M^*$ with bottleneck length $\btopt$. Since $e_l$ is the last edge added, $P'_i$ has odd size. The matching $M^*$ contains an edge joining a point in $P'_i$ with a point in $P\setminus P'_i$. This edge has length at least $|e_l|$.
\end{proof}

By Lemma \ref{longest-edge} the length of the longest edge in $F$ is at most $\btopt$. For each $T_i\in F$, where $1\le i\le k$, our algorithm will compute a plane matching $M_i$ of $P_i$ of size at least $\frac{2n_i}{5}$ with edges of length at most $(\sqrt{2}+\sqrt{3})\btopt$ and returns $\bigcup_{i=1}^{k} M_i$. To describe the algorithm for tree $T_i$ on vertex set $P_i$, we will write $P$, $T$, $n$, $M$ instead of $P_i$, $T_i$, $n_i$, $M_i$, respectively. Thus, $P$ is a set of $n$ points, where $n$ is even, and $T$ is a minimum spanning tree of $P_i$.

Consider the minimum spanning tree $T$ of $P$ having degree at most five, and let ${T'}$ be the skeleton tree of $T$. Suppose that $T'$ has at least two vertices. We will use the following notation. Let $v'$ be a leaf in $T'$, and let $w'$ be the neighbor of $v'$. Recall that $v'$ and $w'$ are copies of vertices $v$ and $w$ in $T$. In $T$, we consider the clockwise ordering of the neighbors of $v$. Let this ordering be $w, u_1, u_2, \dots, u_k$ for some $1\le k\le 4$. Clearly $u_1,\dots,u_k$ are leaves in $T$. Consider two leaves $u_i$ and $u_j$ where $i<j$. We define $\cw(u_ivu_j)$ as the clockwise angle from $\overline{vu_i}$ to $\overline{vu_j}$. We say that the leaf $v'$ (or its copy $v$) is an {\em anchor} if $k=2$ and $\cw(u_1vu_2)\ge\pi$. See Figure \ref{empty-skeleton-angle-fig}.

\begin{figure}[ht]
  \centering
    \includegraphics[width=0.7\textwidth]{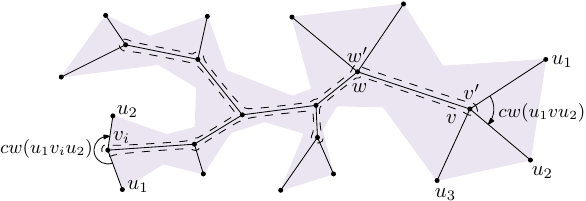}
  \caption{The vertices around $v$ are sorted clockwise, and $v_i$ is an anchor vertex.}
\label{empty-skeleton-angle-fig}
\end{figure}

Now we describe how one can iteratively compute a plane matching of proper size with bounded-length edges from $T$. We start with an empty matching $M$. Each iteration consists of two steps, during which we add edges to $M$. As we prove later, the output is a plane matching of $P$ of size at least $\frac{2}{5}n$ with bottleneck at most $(\sqrt{2}+\sqrt{3})\btopt$.

\subsubsection{Step 1}

We keep applying the following process as long as $T$ has more than six vertices and $T'$ has some non-anchor leaf. Note that $T'$ has at least two vertices.
 
Take a non-anchor leaf $v'$ in $T'$ and according to the number $k$ of leaves connected to $v$ in $T$ do the following:
 \begin{description}
  \item[{\em k} = 1] add $(v, u_1)$ to $M$, and set $T=T\setminus\{v, u_1\}$. 
  \item[{\em k} = 2] since $v'$ is not an anchor, $\cw(u_1vu_2)< \pi$. By Lemma \ref{empty-triangle-lemma} the triangle $\bigtriangleup u_1vu_2$ is empty. We add $(u_1, u_2)$ to $M$, and set $T=T\setminus\{u_1, u_2\}$.
  \item[{\em k} = 3] in this case $v$ has degree four and at least one of $\cw(u_1vu_2)$ and $\cw(u_2vu_3)$ is less than $\pi$. W.l.o.g. suppose that $\cw(u_1vu_2)< \pi$. By Lemma \ref{empty-triangle-lemma} the triangle $\bigtriangleup u_1vu_2$ is empty. Add $(u_1, u_2)$ to $M$ and set $T=T\setminus\{u_1, u_2\}$. 
  \item[{\em k} = 4] this case is handled similarly as the case $k=3$.
 \end{description}

At the end of Step 1, $T$ has at most six vertices or all the leaves of $T'$ are anchors. In the former case, we add edges to $M$ as will be described in Section \ref{base-cases} and after which the algorithm terminates. In the latter case we go to Step 2.

\subsubsection{Step 2}

In this step we deal with the case that $T$ has more than six vertices and all the leaves of $T'$ are anchors. We define the {\em second level skeleton tree} ${T''}$ to be the skeleton tree of $T'$. In other words, $T''$ is the tree which is obtained from $T'$ by removing all the leaves. For clarity we use $w$ to refer to a leaf of $T''$, and we use $w$, $w'$, and $w''$, respectively, to refer to the copies of vertex $w$ in $T$, $T'$, and $T''$. For now suppose that $T''$ has at least two vertices. Consider a leaf $w''$ and its neighbor $y''$ in $T''$. Note that in $T$, $w$ is connected to $y$, to at least one anchor, and possibly to some leaves of $T$. After Step 1, the copy of $w''$ in $T'$, i.e. $w'$, is connected to anchors $v'_1,\dots,v'_k$ in $T'$ (or $v_1,\dots,v_k$ in $T$) for some $1 \le k \le 4$, and connected to at most $4-k$ leaves of $T$. In $T$, we consider the clockwise ordering of the non-leaf neighbors of $w$. Let this ordering be $y, v_1, v_2, \dots, v_k$. We denote the pair of leaves connected to anchor $v_i$ by $a_i$ and $b_i$ in clockwise order around $v_i$; see Figure \ref{anchor-orientation}. 

\begin{figure}[ht]
  \centering
    \includegraphics[width=0.6\textwidth]{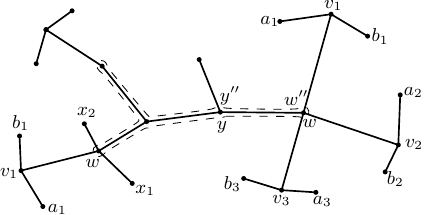}
  \caption{Second level skeleton tree $T''$ is surrounded by dashed line. $v_i$'s are ordered clockwise around leaf vertex $w''$, as well as $x_i$'s. $a_i$ and $b_i$ are ordered clockwise around $v_i$.}
\label{anchor-orientation}
\end{figure}

In this step we pick an arbitrary leaf $w''\in T''$ and according to the number of anchors incident to $w''$, i.e. $k$, we 
add edges to $M$. Since $1 \le k \le 4$, four cases occur and we discuss each case separately. Before that, we state some lemmas.

\begin{figure}[ht]
  \centering
    \includegraphics[width=0.3\textwidth]{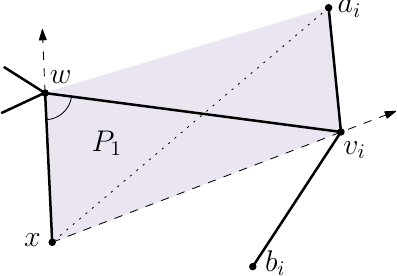}
  \caption{Illustrating Lemma \ref{empty-quadrilateral-lemma}.}
\label{empty-quadrilateral-fig}
\end{figure}

\begin{lemma}
\label{empty-quadrilateral-lemma}
Let $w''$ be a leaf in $T''$. Consider an anchor $v_i$ which is adjacent to $w$ in $T$. For any neighbor $x$ of $w$ for which $x\neq v_i$, if $\cw(v_iwx)\le\pi/2$ (resp. $\cw(xwv_i)\le\pi/2$), the polygon $P_1=\{v_i,x,w,a_i\}$ (resp. $P_2 = \{v_i,b_i,w,x\}$) is convex and empty.
\end{lemma}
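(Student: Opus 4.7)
My plan is to show two things about $P_1$: that its interior contains no point of $P$ besides the four vertices $v_i, x, w, a_i$ (``emptiness''), and that the four vertices are in convex position (``convexity''). I set $\alpha_1 := \cw(wv_ia_i)$, $\alpha_3 := \cw(b_iv_iw)$, $\theta := \cw(v_iwx)$, and $\beta := \angle wv_ix$.

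First I will observe that $\overline{wv_i}$ separates $a_i$ from $x$. Since $v_i$ is an anchor, the three clockwise angles $\alpha_1, \cw(a_iv_ib_i), \alpha_3$ around $v_i$ sum to $2\pi$, each is at least $\pi/3$ (MST angle condition), and $\cw(a_iv_ib_i) \ge \pi$ (anchor condition); hence $\alpha_1, \alpha_3 \in [\pi/3, 2\pi/3]$ with $\alpha_1 + \alpha_3 \le \pi$. In particular $a_i$ lies in one open half-plane bounded by the line $wv_i$. Similarly, $\theta \in [\pi/3, \pi/2]$ (MST lower bound plus the hypothesis) places $x$ in the opposite open half-plane. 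So $P_1$ decomposes along $\overline{wv_i}$ into the two triangles $\bigtriangleup wv_ia_i$ and $\bigtriangleup wv_ix$, each of which is empty of $P \setminus \{v_i, w, a_i, x\}$ by Lemma~\ref{empty-triangle-lemma} applied to the pairs of adjacent MST edges $(wv_i, v_ia_i)$ and $(wv_i, wx)$; therefore so is $P_1$.

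To verify convexity, I will bound each interior angle of $P_1$ strictly below $\pi$. By the MST cycle property applied to the triangles $\bigtriangleup wv_ia_i$ and $\bigtriangleup wv_ix$, the non-MST edges $wa_i$ and $v_ix$ are the longest sides, so the opposite angles $\alpha_1$ at $v_i$ and $\theta$ at $w$ are the largest in their respective triangles; this bounds the interior angles of $P_1$ at $a_i$ and at $x$ by $\alpha_1 \le 2\pi/3 < \pi$ and $\theta \le \pi/2 < \pi$. At $w$, the interior angle equals $\theta + \angle v_iwa_i$, and the Gabriel property of the MST edge $v_ia_i$ (whose diameter disk cannot contain $w$) gives $\angle v_iwa_i < \pi/2$; combined with $\theta \le \pi/2$ the sum is strictly less than $\pi$.

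The hard part will be the angle at $v_i$, which equals $\alpha_1 + \beta$: the naive bound $\alpha_1 + \beta \le 2\pi/3 + \pi/2 = 7\pi/6$ is too weak. Here I will bring in the second anchor-leaf $b_i$. Since $(wv_i), (wx), (b_iv_i)$ are all MST edges, applying the MST cycle property to the $4$-cycle $v_i \to w \to x \to b_i \to v_i$ (whose unique non-MST edge is $(xb_i)$) yields $|xb_i| \ge |wx|$. This places $x$ in the closed half-plane of the perpendicular bisector of $\overline{wb_i}$ that contains $w$. Translating this back to an angular constraint at $v_i$, and using that $v_ib_i$ makes angle $\alpha_3$ with $v_iw$ (on the same side of $\overline{wv_i}$ as $x$), one obtains $\beta \le \alpha_3$. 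Combining, $\alpha_1 + \beta \le \alpha_1 + \alpha_3 \le \pi$, as needed. Together with the emptiness argument above, this proves the lemma; the case of $P_2$ under the symmetric hypothesis $\cw(xwv_i)\le\pi/2$ is handled identically by swapping the roles of $a_i$ and $b_i$.
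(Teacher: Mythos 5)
Your emptiness argument and your bounds on the interior angles at $x$, $w$, and $a_i$ are sound (and the emptiness part coincides with the paper's: both triangles $\bigtriangleup v_iwx$ and $\bigtriangleup v_iwa_i$ are empty by Lemma \ref{empty-triangle-lemma}). The gap is exactly at the angle at $v_i$, which is the hard part. You assert that $|xb_i|\ge|wx|$, i.e.\ that $x$ lies in the closed half-plane of the perpendicular bisector of $\overline{wb_i}$ containing $w$, ``translates back'' to $\beta=\angle wv_ix\le\alpha_3=\cw(b_iv_iw)$. That implication is not valid from the facts you cite: the set of points at least as close to $w$ as to $b_i$ is a half-plane, not a wedge at $v_i$, and it is contained in the wedge $\{\angle wv_ip\le\alpha_3\}$ only when $|v_iw|$ and $|v_ib_i|$ are suitably related. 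For instance, if $|v_ib_i|>|v_iw|$ then every point in a neighbourhood of $v_i$ is closer to $w$ than to $b_i$, including points with $\angle wv_ip>\alpha_3$ on the same side of the line $v_iw$ as $b_i$; if $|v_ib_i|<|v_iw|\cos\alpha_3$, points far out in directions just beyond $\overrightarrow{v_ib_i}$ are again closer to $w$. Neither situation is ruled out by anything you invoke at that step, so ``one obtains $\beta\le\alpha_3$'' is unsupported. The conclusion can in fact be rescued, but only by bringing in further cycle-property inequalities you never use there (e.g.\ $|xv_i|\ge\max(|xw|,|wv_i|)$ and $|wb_i|\ge\max(|wv_i|,|v_ib_i|)$) together with the hypothesis $\cw(v_iwx)\le\pi/2$; that is genuine extra work, not a routine translation. (Also note that $\alpha_1+\beta\le\pi$ would only give a possibly degenerate angle at $v_i$, not strict convexity.)

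The paper sidesteps this difficulty entirely: instead of bounding all four interior angles, it proves convexity by showing that the diagonals $\overline{v_iw}$ and $\overline{a_ix}$ cross. Concretely, $a_i$ lies to the left of $\overrightarrow{xv_i}$ (the anchor condition $\cw(a_iv_ib_i)\ge\pi$ together with the emptiness of $\bigtriangleup v_ixw$, which forces $b_i$ to the right of $\overrightarrow{xv_i}$), and $a_i$ lies to the right of $\overrightarrow{xw}$ (otherwise $\cw(a_iwv_i)>\pi/2$ would make the MST edge $\overline{v_ia_i}$ the longest side of $\bigtriangleup a_iwv_i$, contradicting the cycle property); this is precisely where the hypothesis $\cw(v_iwx)\le\pi/2$ enters. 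Either adopt that diagonal-crossing argument, or supply an actual proof of $\beta\le\alpha_3$ using the additional MST inequalities above.
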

\begin{proof}
We prove the case when $\cw(v_iwx)\le\pi/2$; see Figure \ref{empty-quadrilateral-fig}. The proof for the second case is symmetric. To prove the convexity of $P_1$ we show that the diagonals $\overline{v_iw}$ and $\overline{a_ix}$ of $P_1$ intersect each other. To show the intersection we argue that $a_i$ lies to the left of $\overrightarrow{xv_i}$ and to the right of $\overrightarrow{xw}$.

Consider $\overrightarrow{xv_i}$. According to Lemma \ref{empty-triangle-lemma}, triangle $\bigtriangleup v_ixw$ is empty so $b_i$ lies to the right  of $\overrightarrow{xv_i}$. On the other hand, $v_i$ is an anchor, so $\cw(a_iv_ib_i) \ge \pi$, and hence $a_i$ lies to the left of $\overrightarrow{xv_i}$.
Now consider $\overrightarrow{xw}$. For the sake of contradiction, suppose that $a_i$ is to the left of $\overrightarrow{xw}$. Since $\cw(v_iwx)\le\pi/2$, the angle $\cw(a_iwv_i)$ is greater than $\pi/2$. This means that $\overline{v_ia_i}$ is the largest side of $\bigtriangleup a_iwv_i$, which contradicts that $\overline{v_ia_i}$ is an edge of $MST(P)$. So $a_i$ lies to the right of $\overrightarrow{xw}$. Therefore, $\overline{v_iw}$ intersects $\overline{a_ix}$ and $P_1$ is convex. $P_1$ is empty because by Lemma \ref{empty-triangle-lemma}, the triangles $\bigtriangleup v_iwx$ and $\bigtriangleup v_iwa_i$ are empty. 
\end{proof}

\begin{figure}[ht]
  \centering
    \includegraphics[width=0.4\textwidth]{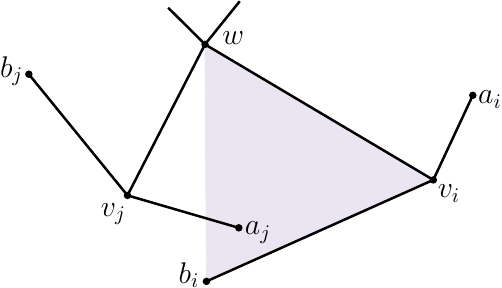}
  \caption{Proof of Lemma \ref{clockwise-order-lemma}.}
\label{clockwise-order-fig}
\end{figure}

\begin{lemma}
\label{clockwise-order-lemma}
Let $w''$ be a leaf in $T''$ and consider the clockwise sequence $v_1,\dots,v_k$ of anchors that are incident on $w$. The sequence of vertices $a_1, v_1, b_1, \dots,a_k,v_k,b_k$ are angularly sorted in clockwise order around $w$. 
\end{lemma}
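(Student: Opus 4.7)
The plan is to prove the lemma in two steps: first that each triple $(a_i,v_i,b_i)$ appears consecutively around $w$ in exactly the clockwise order $a_i,v_i,b_i$, then that the $k$ triples themselves are arranged around $w$ in the same clockwise order as the $v_i$'s, with $b_i$ preceding $a_{i+1}$ in the clockwise arc between consecutive anchors.

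For the first step I would place local coordinates with $v_i$ at the origin and $w$ on the positive $x$-axis. The naming convention forces the clockwise angle $\theta_a$ from $\overrightarrow{v_iw}$ to $\overrightarrow{v_ia_i}$ to lie in $(0,\pi)$ and the clockwise angle $\theta_b$ to $\overrightarrow{v_ib_i}$ to lie in $(\pi,2\pi)$; the anchor condition $\cw(a_iv_ib_i)\ge\pi$ together with $\theta_b<2\pi$ gives $\theta_a<\pi$, and the $\pi/3$ lower bound on MST angles at $v_i$ further tightens both ranges. Hence $a_i$ lies strictly below the $x$-axis and $b_i$ strictly above it. Seen from $w$, the direction angles of $a_i$, $v_i$, $b_i$ are therefore in $(-\pi,0)$, equal to $\pi$, and in $(0,\pi)$ respectively, and reading them in decreasing standard angle (mod $2\pi$) yields the clockwise cyclic order $a_i,v_i,b_i$.

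For the second step, let $v_i$ and $v_{i+1}$ be consecutive in the clockwise ordering of the non-leaf neighbors of $w$. I would rule out three bad configurations in turn: (a) $b_i$ lying clockwise of $v_{i+1}$, (b) $a_{i+1}$ lying counterclockwise of $v_i$, and (c) $a_{i+1}$ appearing clockwise-before $b_i$ inside the arc from $v_i$ to $v_{i+1}$. In each case the supposed bad placement, together with the first step, forces one of the MST segments among $(v_i,b_i)$, $(v_{i+1},a_{i+1})$, $(w,v_i)$, $(w,v_{i+1})$ to cross the line through another. The MST cycle property applied to an appropriate four-cycle (for instance $w,v_{i+1},b_i,v_i$ in case~(a), which yields $|v_{i+1}b_i|\ge\max\{|wv_{i+1}|,|wv_i|,|v_ib_i|\}$) together with the Gabriel-disk emptiness of the diameters $wv_i$ and $v_ib_i$, both of which exclude $v_{i+1}$, then forces the crossing point to land inside the actual MST segment, contradicting the non-crossing of MST edges.

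The main obstacle is this second step, because non-crossing of MST edges by itself does not rule out the segment $(v_i,b_i)$ sweeping over $\overrightarrow{wv_{i+1}}$ strictly beyond $v_{i+1}$. The empty-quadrilateral statement of Lemma~\ref{empty-quadrilateral-lemma}, just proved above, cleanly resolves the sub-case $\cw(v_iwv_{i+1})\le\pi/2$ by producing the convex empty quadrilateral $\{v_i,v_{i+1},w,a_i\}$, whose convexity pins down the required angular order around $w$; its symmetric half covers $\cw(v_{i+1}wv_i)\le\pi/2$. The remaining intermediate range of $\cw(v_iwv_{i+1})$ is where the MST cycle property and the Gabriel-disk estimates sketched above must be combined with more care to close the argument.
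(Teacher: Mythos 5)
Your step 1 (the within-triple clockwise order $a_i,v_i,b_i$ around $w$, from the anchor condition $\cw(a_iv_ib_i)\ge\pi$ and the $\pi/3$ bound on MST angles at $v_i$) is correct, but it is the easy half. The actual content of Lemma \ref{clockwise-order-lemma} is your step 2 -- that the triples of distinct anchors do not interleave angularly around $w$ -- and there you do not have a proof: you explicitly leave open the range where $\cw(v_iwv_{i+1})$ lies strictly between $\pi/2$ and $3\pi/2$, which is the generic situation (with up to four anchors around $w$ the gaps between consecutive anchors are typically well above $\pi/2$). Moreover, even in the sub-case you claim to settle, Lemma \ref{empty-quadrilateral-lemma} applied with $x=v_{i+1}$ produces the empty convex quadrilateral $\{v_i,v_{i+1},w,a_i\}$, which involves $a_i$; the two points whose order is at issue in the clockwise arc from $v_i$ to $v_{i+1}$ are $b_i$ and $a_{i+1}$, and emptiness of that quadrilateral does not by itself forbid them from interleaving (both can lie outside it, beyond the sides $a_iv_i$ and $v_iv_{i+1}$, yet appear in the wrong angular order). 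The heavier machinery you invoke (MST cycle property on four-cycles, Gabriel-disk emptiness, forcing a crossing of MST edges) is not shown to produce the needed contradiction either, and as you note, non-crossing of MST edges alone does not prevent $\overline{v_ib_i}$ from reaching angularly past $v_{i+1}$.

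The paper closes exactly this step with a much shorter argument, and it is the one you are missing: suppose, for contradiction, that $v_i$ comes before $v_j$ but $a_j$ comes before $b_i$ in the clockwise order around $w$. Then either $a_j$ lies in the triangle $\bigtriangleup wv_ib_i$ or $b_i$ lies in the triangle $\bigtriangleup wv_ja_j$. Each of these triangles is spanned by two \emph{adjacent MST edges} ($(w,v_i),(v_i,b_i)$ and $(w,v_j),(v_j,a_j)$ respectively), so by Lemma \ref{empty-triangle-lemma} both are empty of other points of $P$ -- a contradiction. In other words, the only emptiness facts you need are those two triangles, not the quadrilaterals of Lemma \ref{empty-quadrilateral-lemma} nor Gabriel-disk or cycle-property estimates; combined with your step 1 this yields the full sorted sequence. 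As it stands, your proposal does not establish the lemma.
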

\begin{proof}
 Using contradiction, consider two vertices $v_i$ and $v_j$, and assume that $v_i$ comes before $v_j$ and $a_j$ comes before $b_i$ in the clockwise order; see Figure \ref{clockwise-order-fig}. Either $a_j$ is in $\bigtriangleup wv_ib_i$ or $b_i$ is in $\bigtriangleup wv_ja_j$. However, by Lemma \ref{empty-triangle-lemma}, neither of these two cases can happen.
\end{proof}

\begin{figure}[ht]
  \centering
\setlength{\tabcolsep}{0in}
  $\begin{tabular}{cc}
  \multicolumn{1}{m{.5\columnwidth}}{\centering\includegraphics[width=.3\columnwidth]{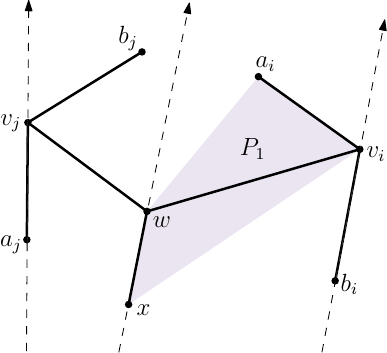}}
  &\multicolumn{1}{m{.5\columnwidth}}{\centering\includegraphics[width=.3\columnwidth]{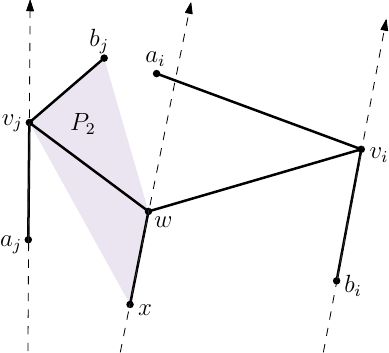}}
  \\
  (a) & (b)
  \end{tabular}$
  \caption{Illustrating Lemma \ref{anchor2-vertex1-lemma}.}
\label{anchor2-vertex1-1-fig}
\end{figure}

\begin{lemma}
\label{anchor2-vertex1-lemma}
Let $w''$ be a leaf in $T''$ and consider the clockwise sequence $v_1,\dots,v_k$ of anchors that are adjacent to $w$. Let $1\le i < j \le k$ and let $x$ be a neighbor of $w$ for which $x\neq v_i$ and $x\neq v_j$. 
\begin{enumerate}
 \item If $x$ is between $v_i$ and $v_j$ in the clockwise order:
    \begin{enumerate}
      \item if $a_i$ is to the right of $\overrightarrow{xw}$, then $P_1=\{x, w, a_i, v_i\}$ is convex and empty.
      \item if $a_i$ is not to the right of $\overrightarrow{xw}$, then $P_2=\{w, x,v_j, b_j\}$ is convex and empty.
    \end{enumerate}
  \item If $v_j$ is between $v_i$ and $x$, or $v_i$ is between $x$ and $v_j$ in the clockwise order:
    \begin{enumerate}
      \item if $b_i$ is to the left of $\overrightarrow{xw}$, then $P_3=\{w, x, v_i, b_i\}$ is convex and empty.
      \item if $b_i$ is not to the left of $\overrightarrow{xw}$, then $P_4=\{x, w, a_j, v_j\}$ is convex and empty.
    \end{enumerate}
\end{enumerate}
\end{lemma}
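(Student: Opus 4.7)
I will adapt the proof strategy of Lemma~\ref{empty-quadrilateral-lemma} to each of the four subcases. \emph{Emptiness} is uniform: in every subcase the appropriate diagonal of the claimed quadrilateral splits it into two triangles, each spanned by a pair of adjacent $MST(P)$-edges and hence empty by Lemma~\ref{empty-triangle-lemma}. For instance, for $P_1=\{x,w,a_i,v_i\}$ in Case~1(a), the diagonal $\overline{wv_i}$ yields $\bigtriangleup xwv_i$ (with adjacent MST-edges $\overline{xw},\overline{wv_i}$) and $\bigtriangleup wa_iv_i$ (with adjacent MST-edges $\overline{wv_i},\overline{v_ia_i}$).

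For \emph{convexity}---equivalently, that the four points are in convex position---two of the four triangle-inclusion tests come for free from the two empty triangles (e.g., $a_i\notin\bigtriangleup xwv_i$ and $x\notin\bigtriangleup wa_iv_i$ in Case~1(a)). The remaining two inclusions I plan to rule out by combining the hypothesis on $a_i$'s position with respect to $\overrightarrow{xw}$ with the anchor-side property that $a_i$ lies to the left of $\overrightarrow{xv_i}$; the latter is obtained exactly as in the proof of Lemma~\ref{empty-quadrilateral-lemma}, using Lemma~\ref{empty-triangle-lemma} applied to $\bigtriangleup v_ixw$ and the anchor condition $\cw(a_iv_ib_i)\ge\pi$, and crucially nowhere invokes the magnitude of $\cw(v_iwx)$. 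Together these facts pin $a_i$ into a wedge that forces $w\notin\bigtriangleup xa_iv_i$ and $v_i\notin\bigtriangleup xwa_i$.

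Case~1(b) is handled by the mirror argument applied to the ``far'' anchor $v_j$ and its leaf $b_j$, yielding $P_2=\{w,x,v_j,b_j\}$. The only step that is not purely symmetric is deriving the needed geometric condition on $b_j$ from the complementary hypothesis that $a_i$ is not to the right of $\overrightarrow{xw}$. For this I will invoke Lemma~\ref{clockwise-order-lemma}, which fixes the clockwise ordering $a_1,v_1,b_1,\dots,a_k,v_k,b_k$ around $w$, together with the MST-based $\pi/3$ lower bound on angular gaps between consecutive neighbors of $w$, to force $b_j$ to the left of $\overrightarrow{xw}$. Cases~2(a) and~2(b) will be proved analogously, with the roles of $(v_i,a_i,b_i)$ and $(v_j,a_j,b_j)$ interchanged according to whether $x$ lies clockwise or counter-clockwise of both anchors.

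The main obstacle I anticipate is precisely this coupling step in the (b)-subcases: converting a hypothesis about the leaf of one anchor into a geometric statement about the leaf of the other anchor, via the global clockwise ordering around $w$ and the MST-derived angular bounds.
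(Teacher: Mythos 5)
Your plan follows the paper's proof in all essentials: emptiness is obtained exactly as in the paper by applying Lemma~\ref{empty-triangle-lemma} to the two triangles cut off by a diagonal of the quadrilateral, and convexity is obtained by side-of-directed-line arguments (anchor condition $\cw(a_iv_ib_i)\ge\pi$, empty triangles, and the stated hypothesis on $\overrightarrow{xw}$) that force the diagonals to cross, with case 1(b) (and symmetrically case 2) handled, just as in the paper, by combining the complementary hypothesis with the clockwise ordering of Lemma~\ref{clockwise-order-lemma} to place $b_j$ to the left of $\overrightarrow{xw}$. The only cosmetic difference is that in case 1(a) the paper works with the ray $\overrightarrow{b_iv_i}$ together with the hypothesis on $\overrightarrow{xw}$, whereas you reuse the $\overrightarrow{xv_i}$ argument from Lemma~\ref{empty-quadrilateral-lemma}; both amount to the same conclusion that $\overline{xa_i}$ intersects $\overline{wv_i}$, so this is essentially the paper's proof.
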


\begin{proof}
We only prove the first case, the proof for the second case is symmetric. Thus, we assume that $x$ is between $v_i$ and $v_j$ in the clockwise order. First assume that $a_i$ is to the right of $\overrightarrow{xw}$. See Figure \ref{anchor2-vertex1-1-fig}(a). Consider $\overrightarrow{b_iv_i}$. Since $v_i$ is an anchor, $a_i$ cannot be to the right of $\overrightarrow{b_iv_i}$, and according to Lemma \ref{empty-triangle-lemma}, $x$ cannot be to the right of $\overrightarrow{b_iv_i}$. For the same reasons, both the vertices $b_j$ and $x$ cannot be to the left of $\overrightarrow{a_jv_j}$. Now consider $\overrightarrow{xw}$. By assumption, $a_i$ is to the right of $\overrightarrow{xw}$. Therefore $\overline{xa_i}$ intersects $\overline{wv_i}$ and hence $P_1$ is convex.

Now assume that $a_i$ is not to the right of $\overrightarrow{xw}$; see Figure \ref{anchor2-vertex1-1-fig}(b). By Lemma \ref{clockwise-order-lemma}, $b_j$ is to the left of $\overrightarrow{xw}$. Therefore, $\overline{xb_j}$ intersects $\overline{wv_j}$ and hence $P_2$ is convex. 
The emptiness of the polygons follows directly from Lemma \ref{empty-triangle-lemma}.
\end{proof}

\begin{figure}[ht]
  \centering
    \includegraphics[width=0.35\textwidth]{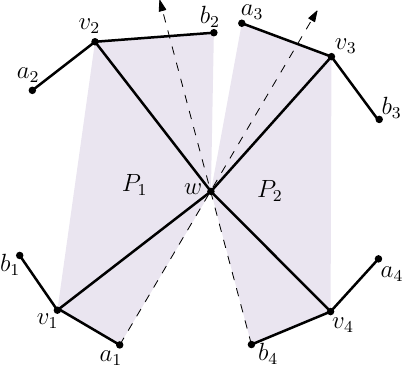}
  \caption{Illustrating Lemma \ref{separation-lemma}.}
\label{anchor4-5-fig}
\end{figure}

\begin{lemma}
\label{separation-lemma}
Let $w''$ be a leaf in $T''$ and consider the clockwise sequence $v_1,\dots,v_k$ of anchors that are incident on $w$. If $4\le k \le 5$, then at least one of the pentagons $P_1=\{a_1,v_1,v_2,b_2,w\}$ and $P_2=\{a_{k-1},v_{k-1},v_k,b_k,w\}$ is convex and empty.
\end{lemma}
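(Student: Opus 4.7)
The plan is to combine an angle-counting argument at $w$ with two applications of Lemma~\ref{empty-quadrilateral-lemma}. First I would show that for $k\in\{4,5\}$, at least one of $\cw(v_1wv_2)\le\pi/2$ or $\cw(v_{k-1}wv_k)\le\pi/2$ holds: there are exactly five consecutive angles between the neighbors of $w$ in $T$ (namely $y,v_1,\dots,v_4$ when $k=4$, and $v_1,\dots,v_5$ when $k=5$), each at least $\pi/3$ by Lemma~\ref{empty-triangle-lemma} and summing to $2\pi$. If both specified angles were strictly greater than $\pi/2$, they alone would contribute more than $\pi$, while the remaining three angles (each $\ge\pi/3$) already sum to at least $\pi$, forcing the total above $2\pi$, a contradiction.

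Second, assume without loss of generality that $\cw(v_1wv_2)\le\pi/2$. Applying Lemma~\ref{empty-quadrilateral-lemma} with anchor $v_i=v_1$ and neighbor $x=v_2$ produces the convex empty quadrilateral $(v_1,v_2,w,a_1)$, and applying it again with $v_i=v_2$ and $x=v_1$ (the symmetric case, whose hypothesis $\cw(xwv_i)\le\pi/2$ matches ours) produces another convex empty quadrilateral $(v_2,b_2,w,v_1)$. These two quadrilaterals share the side $\overline{v_1v_2}$ and overlap exactly on the empty MST triangle $\triangle v_1v_2w$; their union is $P_1$, which therefore decomposes into the three empty MST triangles $\triangle v_1wa_1$, $\triangle v_1v_2w$, $\triangle v_2wb_2$ (each empty by Lemma~\ref{empty-triangle-lemma}), giving the emptiness of $P_1$.

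For the convexity of $P_1$ I would verify that every interior angle is at most $\pi$: the angles at $a_1$ and $b_2$ are triangle angles; at each $v_i$ ($i\in\{1,2\}$) the interior angle is the sum of the anchor-bounded MST angle at $v_i$ between $w$ and the relevant leaf (at most $2\pi/3$, since the anchor constraint $\cw(a_iv_ib_i)\ge\pi$ combined with the other MST angle $\ge\pi/3$ forces this bound) and the angle at $v_i$ in $\triangle v_1v_2w$ (at most $\cw(v_1wv_2)\le\pi/2$ because $\overline{v_1v_2}$ is the longest side of this empty MST triangle); and at $w$ the angle is $\cw(a_1wv_1)+\cw(v_1wv_2)+\cw(v_2wb_2)$. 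The main obstacle is bounding all of these by $\pi$: the naive bounds above can individually exceed $\pi$, and the convexities of the two quadrilaterals give only $\cw(a_1wv_1)+\cw(v_1wv_2)\le\pi$ and $\cw(v_1wv_2)+\cw(v_2wb_2)\le\pi$, which do not combine to yield the required bound on the full sum. I plan to close the gap via a case analysis on which side of the rays $\overrightarrow{wv_1}$ and $\overrightarrow{wv_2}$ the leaves $b_2$ and $a_1$ lie on, in the spirit of the proof of Lemma~\ref{anchor2-vertex1-lemma}, using the global MST cycle inequalities to tighten the bounds; if $P_1$ is still forced to fail convexity in some configuration, the parallel argument applied to $P_2$ (using $\cw(v_{k-1}wv_k)\le\pi/2$, which step~1 guarantees in this case) delivers a convex empty pentagon instead.
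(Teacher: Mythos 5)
Your reduction to Lemma~\ref{empty-quadrilateral-lemma} does not prove convexity of the pentagon, and this is the heart of the lemma. Granting $\cw(v_1wv_2)\le\pi/2$, the two applications of Lemma~\ref{empty-quadrilateral-lemma} give convex empty quadrilaterals $\{v_1,v_2,w,a_1\}$ and $\{v_2,b_2,w,v_1\}$, but a union of two convex regions glued along the triangle $\bigtriangleup v_1v_2w$ need not be convex. Concretely, the pentagon's interior angle at $w$ is $\cw(a_1wv_1)+\cw(v_1wv_2)+\cw(v_2wb_2)$; the quadrilateral convexities only bound each pair of consecutive summands by $\pi$, and values near $\pi/2,\ \pi/3,\ \pi/2$ are compatible with every MST and anchor constraint you invoke, so the sum can exceed $\pi$ (the bounds at $v_1,v_2$ have the same defect, $2\pi/3+\pi/2>\pi$, as you note). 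You acknowledge the gap, but the proposed repairs do not close it: the fallback to $P_2$ presupposes $\cw(v_{k-1}wv_k)\le\pi/2$, which your step~1 no longer guarantees once you have fixed, w.l.o.g., the bound used for $P_1$; and even when that angle bound does hold, convexity of $P_2$ runs into exactly the same unproved bound at $w$. In short, ``which of the two middle angles at $w$ is at most $\pi/2$'' is not the dichotomy that decides which pentagon is convex, and nothing in the proposal identifies the correct one. A secondary issue: when the lemma is invoked from the base case with $k=4$ and $w$ having no fifth neighbor, your five-angle count at $w$ does not apply at all.

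The paper's proof supplies precisely the missing idea via a separation dichotomy driven by Lemma~\ref{clockwise-order-lemma} rather than by angles at $w$. Consider $\overrightarrow{a_1w}$: if $b_2$ lies to its left, the line through $a_1$ and $b_2$ separates $w$ from $v_1$ and $v_2$, which pins the pentagon's angle at $w$ down to the triangle angle $\cw(a_1wb_2)<\pi$ in $\bigtriangleup a_1wb_2$; otherwise, Lemma~\ref{clockwise-order-lemma} forces $a_{k-1}$ to lie to the right of $\overrightarrow{b_kw}$ (else $a_{k-1}$ would precede $b_2$ in the clockwise order), so the line through $a_{k-1}$ and $b_k$ separates $w$ from $v_{k-1}$ and $v_k$ and the symmetric conclusion holds for $P_2$. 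The angles at $v_1$ and $v_2$ are then bounded directly by the anchor condition together with Lemma~\ref{empty-triangle-lemma} (e.g.\ $b_1$ lies to the left of $\overrightarrow{v_1v_2}$, hence $\cw(v_2v_1a_1)<\cw(b_1v_1a_1)\le\pi$), not by summing two sub-angles, and emptiness follows from the three empty triangles as in your write-up. Your emptiness argument and the $\le 2\pi/3$ bounds are fine; what is missing is this separation step, which is the actual content of the lemma.
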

\begin{proof}
We prove the case when $k=4$; the proof for $k=5$ is analogous. By Lemma \ref{clockwise-order-lemma}, $a_1$ comes before $b_4$ in the clockwise order. Consider $\overrightarrow{a_1w}$; see Figure \ref{anchor4-5-fig}. Let $l(a_1, b_2)$ denote the line through $a_1$ and $b_2$. If $b_2$ is to the left of $\overrightarrow{a_1w}$, then $l(a_1, b_2)$ separates $w$ from $v_1$ and $v_2$. Otherwise, $b_2$ is to the right of $\overrightarrow{a_1w}$. Now consider $\overrightarrow{b_4w}$. If $a_3$ is to the right of $\overrightarrow{b_4w}$, then $l(a_3, b_4)$ separates $w$ from $v_3$ and $v_4$. The remaining case, i.e., when $a_3$ is to the left of $\overrightarrow{b_4w}$, cannot happen by Lemma \ref{clockwise-order-lemma} (because, otherwise, $a_3$ would come before $b_2$ in the clockwise order).

Thus, we have shown that (i) $l(a_1, b_2)$ separates $w$ from $v_1$ and $v_2$ or (ii) $l(a_3, b_4)$ separates $w$ from $v_3$ and $v_4$. Assume w.l.o.g. that (i) holds. Now to prove the convexity of $P_1$ we show that all internal angles of $P_1$ are less than $\pi$. Since $v_1$ is an anchor, $\cw(b_1v_1a_1)\le \pi$. By Lemma~\ref{empty-triangle-lemma}, $b_1$ is to the left of of $\overrightarrow{v_1v_2}$. Therefore, $\cw(v_2v_1a_1) < \cw(b_1v_1a_1) \le \pi$. On the other hand $cw(wv_1a_1)\ge \pi/3$, so in $\bigtriangleup a_1v_1w$, $\cw(v_1a_1w) \le 2\pi/3$. By a similar analysis $\cw(b_2v_2v_1)$ and $\cw(wb_2v_2)$ are less than $\pi$. In addition, $\cw(a_1wb_2)$ in $\bigtriangleup a_1wb_2$ is less than $\pi$. Thus, $P_1$ is convex. Its emptiness is assured from the emptiness of the triangles $\bigtriangleup v_2wb_2$, $\bigtriangleup v_1wa_1$ and $\bigtriangleup v_1wv_2$.
\end{proof}

Now we are ready to present the details of Step 2. Recall that $T$ has more than six vertices and all leaves of $T'$ are anchors. In this case $T''$ has at least one vertex. If $T''$ has exactly one vertex, we add edges to $M$ as will be described in Section \ref{base-cases} and after which the algorithm terminates. Assume that $T''$ has at least two vertices. Pick a leaf $w''$ in $T''$. As before, let $v_1,\dots,v_k$ where $1\le k \le 4$ be the clockwise order of the anchors connected to $w$. Let $x_1,\dots,x_{\ell}$ where $\ell \le 4-k\le3$ be the clockwise order of the leaves of $T$ connected to $w$; see Figure \ref{anchor-orientation}. This means that $deg(w)=1+k+\ell$, where $k+\ell \le 4$. Now we describe different configurations that may appear at $w$, according to $k$ and $\ell$.
 \begin{description}
  \item[Case 1:] assume $k=1$. If $\ell=0$ or $1$, add $(a_1,v_1)$ and $(b_1,w)$ to $M$ and set $T=T\setminus\{a_1,\allowbreak b_1,\allowbreak v_1,\allowbreak w\}$. If $\ell=1$ remove $x_1$ from $T$ as well. If $\ell=2$, consider $\alpha_i$ as the acute angle between segments $\overline{wv_1}$ and $\overline{wx_i}$. W.l.o.g. assume $\alpha_1 = \min(\alpha_1,\alpha_2)$. Two cases may arise: (i) $\alpha_1\le \pi/2$, (ii) $\alpha_1> \pi/2$. If (i) holds, w.l.o.g. assume that $x_1$ comes before $v_1$ in clockwise order around $w$. According to Lemma \ref{empty-quadrilateral-lemma} polygon $P=\{v_1, b_1, w, x_1\}$ is convex and empty. So we add $(v_1,a_1)$, $(b_1,x_1)$ and $(x_2,w)$ to $M$. Other cases are handled in a similar way. If (ii) holds, according to Lemma \ref{empty-triangle-lemma} triangle $\bigtriangleup x_1wx_2$ is empty. So we add $(v_1, a_1)$, $(b_1,w)$ and $(x_1,x_2)$ to $M$. In both cases set $T=T\setminus\{a_1,b_1,v_1, x_1, x_2, w\}$. If $\ell=3$, remove $x_3$ from $T$ and handle the rest as $\ell=2$.
  \item[Case 2:] assume $k=2$. If $\ell = 0$, we add $(v_1, a_1)$, $(b_1, w)$, and $(v_2, a_2)$ to $M$. If $\ell = 1$ suppose that $x_1$ comes before $v_1$ in clockwise ordering. According to Lemma \ref{anchor2-vertex1-lemma} one of polygons $P_3$ and $P_4$ is empty; suppose it be $P_3=\{w,x_1,v_1,b_1\}$ (where $i=1$ and $j=2$ in Lemma \ref{anchor2-vertex1-lemma}). Thus we add $(v_1, a_1)$, $(b_1,x_1)$, $(v_2,a_2)$ and $(b_2,w)$ to $M$. In both cases set $T=T\setminus\{a_1,\allowbreak b_1,\allowbreak v_1, \allowbreak a_2,b_2,v_2,\allowbreak w\}$ and if $\ell = 1$ remove $x_1$ from $T$ as well. If $\ell = 2$, remove $x_2$ from $T$ and handle the rest as $\ell=1$.
  \item[Case 3:] assume $k=3$. If $\ell = 0$ then set $M=M\cup\{(v_1,a_1), (b_1,w), (v_2,a_2), (v_3,a_3)\}$. If $\ell=1$, consider $\beta_i$ as the acute angle between segments $\overline{wx_1}$ and $\overline{wv_i}$. W.l.o.g. assume $\beta_2$ has minimum value among all $\beta_i$'s and $x_1$ comes after $v_2$ in clockwise order. According to Lemma \ref{anchor2-vertex1-lemma} one of polygons $P_1$ and $P_2$ is empty; suppose it be $P_1=\{w,x_1,v_2,a_2\}$ (where $i=2$ and $j=3$ in Lemma \ref{anchor2-vertex1-lemma}). Thus we set $M=M\cup\{(v_2, b_2), (a_2,x_1),\allowbreak (v_1,a_1), (b_1,w),\allowbreak  (v_3,a_3)\}$. In both cases set $T = T\setminus\{a_1,b_1,\allowbreak v_1, a_2,\allowbreak b_2,v_2, \allowbreak a_3,b_3,v_3, w\}$ and if $\ell = 1$ remove $x_1$ from $T$ as well. Other cases can be handled similarly.

\begin{figure}[ht]
  \centering
    \includegraphics[width=0.35\textwidth]{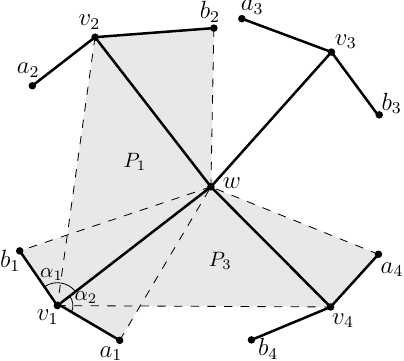}
  \caption{The vertex $v_1$ is shared between $P_1$ and $P_3$.}
\label{case4-fig}
\end{figure}

  \item[Case 4:] assume $k=4$. According to Lemma \ref{separation-lemma} one of $P_1=\{a_1,v_1,v_2,b_2,w\}$ and $P_2=\{a_3,v_3,v_4,b_4,w\}$ is convex and empty. Again by Lemma \ref{separation-lemma} one of $P_3=\{a_4,v_4,v_1,b_1,w\}$ and $P_4=\{a_2,v_2,v_3,b_3,w\}$ is also convex and empty. Without loss of generality assume that $P_1$ and $P_3$ are empty. See Figure \ref{case4-fig}. Clearly, these two polygons share a vertex ($v_1$ in Figure \ref{case4-fig}). Let $\alpha_1=\cw(b_1,v_1,w)$ which is contained in $P_3$ and $\alpha_2=\cw(w,v_1,a_1)$ which is contained in $P_1$. We pick one of the polygons $P_1$ and $P_3$ which minimizes $\alpha_i, i=1,2$. Let $P_1$ be that polygon. So we set $M = M \cup  \allowbreak \{(v_1,b_1) , \allowbreak (v_2,a_2) ,\allowbreak  (a_1,b_2),\allowbreak (v_3, a_3),\allowbreak  (b_3,w),\allowbreak  (v_4,a_4)\}$ and set $T = T\setminus\{a_1,b_1,\allowbreak v_1,\allowbreak  a_2,\allowbreak b_2,v_2, \allowbreak  a_3, b_3,v_3,\allowbreak a_4,\allowbreak b_4\allowbreak,v_4,w \}$. 
 \end{description}

This concludes Step 2. Go back to Step 1.

\subsubsection{Base Cases}
\label{base-cases}

In this section we describe the base cases of our algorithm. As mentioned in Steps 1 and 2, we may have two base cases: (a) $T$ has at most $t\le 6$ vertices, (b) $T''$ has only one vertex.

\begin{paragraph}{(a) {\em t}${\bf \le 6}$} Suppose that $T$ has at most six vertices.
\begin{description}
  \item [{\em t} = 2] it can happen only if $t=n=2$, and we add the only edge to $M$.
  \item [{\em t} = 4, 5] in this case we match four vertices. If $t=4$, $T$ could be a star or a path of length three, and in both cases we match all the vertices. If $t=5$, remove one of the leaves and match other four vertices.
  \item [{\em t} = 6] in this case we match all the vertices. If one of the leaves connected to a vertex of degree two, we match those two vertices and handle the rest as the case when $t=4$, otherwise, each leaf of $T$ is connected to a vertex of degree more than two, and hence $T'$ has at most two vertices. Figure \ref{n-small}(a) shows the solution for the case when $T'$ has only one vertex and $T$ is a star; note that at least two angles are less than $\pi$. 
  Now consider the case when $T'$ has two vertices, $v_1$ and $v_2$, which have degree three in $T$. Figure \ref{n-small}(b) shows the solution for the case when neither $v_1$ nor $v_2$ is an anchor. 
  Figure \ref{n-small}(c) shows the solution for the case when $v_1$ is an anchor but $v_2$ is not.
  Figure \ref{n-small}(d) shows the solution for the case when both $v_1$ and $v_2$ are anchors. Since $v_2$ is an anchor in Figure \ref{n-small}(d), at least one of $\cw(b_2v_2v_1)$ and $\cw(v_1v_2a_2)$ is less than or equal to $\pi/2$. W.l.o.g. assume $\cw(v_1v_2a_2)\le\pi/2$. By Lemma \ref{empty-quadrilateral-lemma} polygon $P=\{v_1,a_2,v_2,a_1\}$ is convex and empty. We add $(v_1,b_1)$, $(v_2,b_2)$, and $(a_1,a_2)$ to $M$.

  \item [{\em t} = 1, 3] this case could not happen. Initially $t=n$ is even. Consider Step 1; before each iteration $t$ is bigger than six and during the iteration two vertices are removed from $T$. So, at the end of Step 1, $t$ is at least five. Now consider Step 2; before each iteration $T''$ has at least two vertices and during the iteration at most one vertex is removed from $T''$. So, at the end of Step 2, $T''$ has at least one vertex that is connected to at least one anchor. This means that $t$ is at least four. Thus, $t$ could never be one or three before and during the execution of the algorithm.

\begin{figure}[ht]
  \centering
\setlength{\tabcolsep}{0in}
  $\begin{tabular}{cccc}
  \multicolumn{1}{m{.25\columnwidth}}{\centering\includegraphics[width=.20\columnwidth]{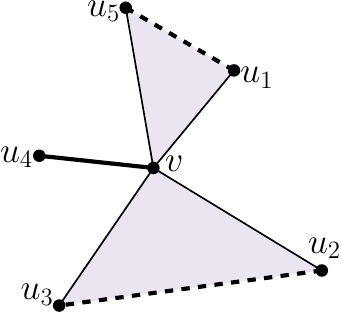}}
  &\multicolumn{1}{m{.25\columnwidth}}{\centering\includegraphics[width=.24\columnwidth]{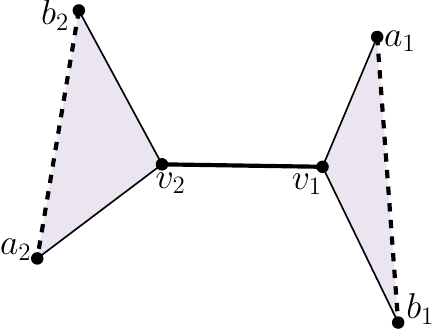}}
  &\multicolumn{1}{m{.25\columnwidth}}{\centering\includegraphics[width=.23\columnwidth]{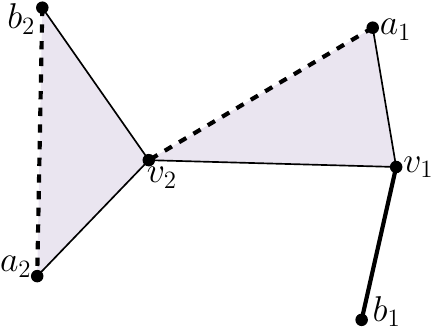}}
  &\multicolumn{1}{m{.25\columnwidth}}{\centering\includegraphics[width=.23\columnwidth]{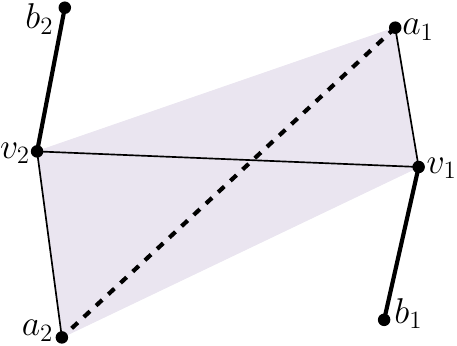}}\\
  (a) & (b)&(c)&(d)
  \end{tabular}$
  \caption{The bold (solid and dashed) edges are added to $M$ and all vertices are matched. (a) a star, (b) no anchor, (c) one anchor, and (d) two anchors.}
\label{n-small}
\end{figure} 

\end{description}
\end{paragraph}

\begin{paragraph}{(b) $T''$ has one vertex}
In this case, the only vertex $w''\in T''$ is connected to at least two anchors, otherwise $w''$ would have been matched in Step 1. So we consider different cases when $w$ is connected to $k$, $2\le k \le 5$ anchors and $\ell \le 5-k$ leaves of $T$:
\begin{description}
 \item[{\em k} = 2] if $\ell=0,1,2$ we handle it as Case 2 in Step 2. If $\ell=3$, at least two leaves are consecutive, say $x_1$ and $x_2$. Since $cw(x_1wx_2) <\pi$ we add $(x_1,x_2)$ to $M$ and handle the rest like the case when $\ell=1$. 
\item[{\em k} = 3] if $\ell=2$ remove $x_2$ from $T$. Handle the rest as Case 3 in Step 2.
\item[{\em k} = 4] if $\ell=1$ remove $x_1$ from $T$. Handle the rest as Case 4 in Step 2. 
\item[{\em k} = 5] add $(v_5, a_5)$ to $M$, remove $a_5$, $b_5$, $v_5$ from $T$, and handle the rest as Case 4 in Step 2.
\end{description}
\end{paragraph}

This concludes the algorithm.

\begin{lemma}
\label{convex-disjoint-lemma}
The convex empty regions that are considered in different iterations of the algorithm, do not overlap.
\end{lemma}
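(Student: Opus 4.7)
The plan is to show, by a direct structural argument, that the convex regions produced by different iterations cannot overlap because their interiors are pairwise empty of points of $P$ and their boundaries are built out of MST edges together with ``shortcut'' chords that, by the various emptiness lemmas, lie inside empty triangles or polygons of $MST(P)$.

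First I would classify the convex regions appearing in each iteration. In Step 1 the region is either a single MST edge or a triangle $\bigtriangleup u_i v u_j$ whose two sides $\overline{v u_i}, \overline{v u_j}$ are MST edges of $T$ and whose third side $\overline{u_i u_j}$ is a shortcut. In Step 2 the region is a convex quadrilateral or pentagon drawn from Lemmas \ref{empty-quadrilateral-lemma}, \ref{anchor2-vertex1-lemma}, and \ref{separation-lemma}, built out of MST edges around $w$ and the anchors $v_i$, with at most one shortcut chord closing the convex hull. The base cases fall into the same two templates. By Lemma \ref{empty-triangle-lemma} together with the cited polygon lemmas, each such region contains no point of $P$ other than its defining vertices.

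Next I would track which vertices each iteration \emph{consumes}. An iteration processing a leaf $v'$ of $T'$ (Step 1) deletes $v$ and its leaf-neighbors from $T$, while an iteration processing a leaf $w''$ of $T''$ (Step 2) deletes $w$, every anchor $v_i$ incident to $w$, and the attached leaves $a_i, b_i$, together with the leaves $x_j$ of $w$. In every case, the vertices lying in the interior of the convex region are removed from $T$ before the next iteration begins, and the only vertices that can possibly be shared between consecutive iterations are the ``hinge'' vertices (the neighbor of $v'$ in $T'$, or the neighbor of $w''$ in $T''$), which sit on the boundary and not in the interior of the region.

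Finally, I would derive a contradiction by assuming that two regions $R_1$ (from iteration $I_1$) and $R_2$ (from some later iteration $I_2$) overlap. Since both are empty of points of $P$ in their interiors, no defining vertex of one can lie inside the other, so the overlap must come from a boundary crossing. The boundary of each $R_i$ consists of MST edges plus at most one shortcut chord; two MST edges cannot cross by planarity of $MST(P)$. A shortcut chord of $R_1$ joins two vertices that were deleted at the end of $I_1$, so if it crossed an MST edge of $R_2$, then some endpoint of that MST edge would be forced to lie inside the empty region $R_1$, a contradiction, and the same argument rules out two shortcut chords crossing. The main obstacle I anticipate is the bookkeeping for Step 2 Case 4, in which two candidate pentagons share an anchor vertex, together with the anchor configurations in the base cases: in each of these I must verify that the particular region selected by the algorithm is bounded on the hinge side by MST edges that survive into later iterations, so that no sibling iteration processed around the hinge's subtree can reach across the hinge to intersect the current region.
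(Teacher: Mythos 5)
Your reduction of ``overlap'' to ``boundary crossing'' is fine (emptiness rules out vertex containment), but the two claims you then lean on are both off. First, the regions from Step 2 are not ``MST edges plus at most one shortcut chord'': the quadrilateral $\{v_i,x,w,a_i\}$ of Lemma \ref{empty-quadrilateral-lemma} has two non-MST boundary edges ($\overline{v_ix}$ and $\overline{wa_i}$), and the pentagon $\{a_1,v_1,v_2,b_2,w\}$ of Lemma \ref{separation-lemma} has three ($\overline{v_1v_2}$, $\overline{b_2w}$, $\overline{wa_1}$); the MST edges $\overline{wv_i}$ (resp.\ $\overline{wv_1},\overline{wv_2}$) lie \emph{inside} these regions as diagonals, and the matching edge actually added ($\overline{a_ix}$, $\overline{a_1b_2}$, etc.) is a diagonal as well. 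Second, and more seriously, your contradiction step is invalid: if a chord of $R_1$ is crossed by an MST edge $e$ of $R_2$, it does not follow that an endpoint of $e$ lies inside $R_1$. Since the endpoints of $e$ are points of $P$ and $R_1$ is empty, $e$ must pass entirely through $R_1$, entering through the chord and exiting through some other boundary edge --- which, in the quadrilateral or pentagon, may itself be a non-MST edge, so the structure you describe yields no contradiction. The same pass-through scenario defeats your claim that two shortcut chords cannot cross.

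What closes the gap --- and what the paper's proof implicitly uses (see Figure \ref{convex-disjoint-fig}) --- is that every region employed is a union of empty triangles each having two sides in $MST(P)$ (Lemma \ref{empty-triangle-lemma}): the triangle itself, the quadrilateral split by the diagonal $\overline{wv_i}$, and the pentagon split by $\overline{wv_1}$ and $\overline{wv_2}$. If two regions had overlapping interiors, two such triangles, one from each region, would overlap. Since no vertex of either lies in the closed other, any side of one triangle that meets the interior of the other must cross its boundary at two points lying on two distinct sides, at least one of which is an MST edge; in particular, if an MST side of one triangle enters the other, two MST edges cross, contradicting the planarity of $MST(P)$. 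If instead only the two non-MST sides entered the opposite triangles, the overlap would be bounded by portions of just two segments and could not have positive area. Note also that your deletion/hinge bookkeeping is not needed: the argument is purely geometric and independent of the order of the iterations.
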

\begin{proof}
In Step 1, Step 2 and the base cases, we used three types of convex empty regions; see Figure \ref{convex-disjoint-fig}.
Using contradiction, suppose that two convex regions $P_1$ and $P_2$ overlap. Since the regions are empty, no vertex of $P_1$ is in the interior of $P_2$ and vice versa. Then, one of the edges in $MST(P)$ that is shared with $P_1$ intersects some edge in $MST(P)$ that is shared with $P_2$, which is a contradiction.   
\end{proof}

\begin{figure}[ht]
  \centering
\setlength{\tabcolsep}{0in}
  $\begin{tabular}{ccc}
  \multicolumn{1}{m{.33\columnwidth}}{\centering\includegraphics[width=.20\columnwidth]{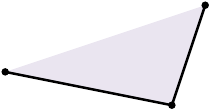}}
  &\multicolumn{1}{m{.33\columnwidth}}{\centering\includegraphics[width=.26\columnwidth]{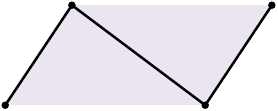}}
  &\multicolumn{1}{m{.33\columnwidth}}{\centering\includegraphics[width=.32\columnwidth]{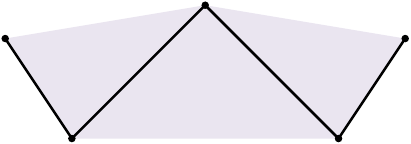}}
  \end{tabular}$
  \caption{Empty convex regions. Bold edges belong to $MST(P)$.}
\label{convex-disjoint-fig}
\end{figure}

\begin{theorem}
Let $P$ be a set of $n$ points in the plane, where $n$ is even, and let $\btopt$ be the minimum bottleneck length of any plane perfect matching of $P$. In $O(n\log n)$ time, a plane matching of $P$ of size at least $\frac{2n}{5}$ can be computed, whose bottleneck length is at most $(\sqrt{2}+\sqrt{3})\btopt$.
\end{theorem}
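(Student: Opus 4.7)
The plan is to verify four properties of the matching $M = \bigcup_{i=1}^{k} M_i$ returned by the algorithm: size, planarity, bottleneck length, and running time.

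For the size bound, I would carry out a small case analysis tallying, for each iteration of the algorithm, the ratio of edges added to vertices removed. Step~1 always contributes $1$ edge per $2$ removed vertices. In Step~2, Case~1 ($k=1$) removes $4$, $5$, $6$, or $7$ vertices while adding $2$, $2$, $3$, or $3$ edges; Case~2 ($k=2$) removes $7$--$9$ vertices for $3$--$4$ edges; Case~3 ($k=3$) removes $10$ or $11$ for $4$ or $5$; and Case~4 ($k=4$) removes $13$ for $6$. Every base case in Section~\ref{base-cases} either matches all remaining vertices or achieves an even better local ratio (for example the singleton-$T''$ case with $k=5$ matches $2+2n_{\text{rest}}/5$ out of the neighborhood). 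Since every local ratio is at least $2/5$, we get $|M_i| \ge 2n_i/5$ for each $T_i$, and summing gives $|M| \ge 2n/5$.

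For planarity, every edge of $M$ is either an MST edge of some $T_i \subseteq F \subseteq MST(P)$ or a diagonal of an empty convex polygon whose sides are MST edges (the triangles used in Step~1, the quadrilaterals and pentagons supplied by Lemmas~\ref{empty-quadrilateral-lemma}, \ref{anchor2-vertex1-lemma}, and \ref{separation-lemma}, and the small polygons used in Section~\ref{base-cases}). Two MST edges never cross. A diagonal does not cross the sides of its host polygon by convexity, and by Lemma~\ref{convex-disjoint-lemma} the empty convex regions used in different iterations are pairwise disjoint, so no two diagonals cross. Finally, an MST edge of a tree $T_j$ cannot cross a diagonal belonging to $T_i$, since otherwise that MST edge would have to enter the interior of an empty convex polygon of $MST(P)$, a contradiction.

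For the bottleneck-length bound, each MST edge in $F$ has length at most $|e_l| \le \btopt$ by Lemma~\ref{longest-edge}. The non-MST edges are the diagonals of the empty convex polygons above. I would bound each such chosen diagonal through its MST-side lengths (all $\le \btopt$) together with the angular constraints: the MST property gives every angle at a vertex at least $\pi/3$, while the anchor condition forces $\cw(a_i v_i b_i) \ge \pi$. A direct worst-case trigonometric analysis, specialized to each polygon type, then yields the uniform upper bound $(\sqrt{2}+\sqrt{3})\btopt$; the extremal configuration turns out to be a short MST two-edge path bent at roughly $\pi/2$ whose outer endpoints are joined by the selected chord. This geometric case analysis is the main technical obstacle of the proof, since one must check every non-MST edge that may be added by Step~1, Step~2, or the base cases and confirm that the same constant dominates.

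For the running time, a degree-$5$ Euclidean minimum spanning tree is computed in $O(n \log n)$ by Monma et al.~\cite{Monma1992}; the Delaunay triangulation $DT(P)$ and the sort of its $O(n)$ edges take $O(n \log n)$ time; running Kruskal's algorithm until every component has even size uses $O(n \alpha(n))$ union--find operations; and Step~1, Step~2, and the base cases each inspect the constant-size neighborhood of a leaf of $T'$ or $T''$ in $O(1)$, visiting each vertex a constant number of times overall. The total time is therefore $O(n \log n)$.
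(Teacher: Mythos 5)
Your size, planarity, and running-time arguments follow the same route as the paper's proof (a local ratio of at least $4/5$ matched vertices among the vertices removed in every case of Step 2 and of the base cases, disjointness of the empty convex regions via Lemma \ref{convex-disjoint-lemma}, constant-time updates of $T$, $T'$, $T''$ per iteration), and these parts are essentially sound. One small inaccuracy: the chords added by the algorithm are not always ``diagonals of an empty convex polygon whose sides are MST edges'' --- e.g.\ the quadrilateral $\{v_1,b_1,w,x_1\}$ used in Case 1 has two sides that are not edges of $MST(P)$ --- so your reason why an MST edge cannot cross a chord does not stand as stated; the disjointness lemma is what carries the argument, as in the paper.

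The genuine gap is the bottleneck bound, which is exactly the part you defer. The paper does not run a worst-case trigonometric analysis over all polygon types. It observes that every edge added to $M$ has its endpoints joined by a path in $T$ all of whose edges have length at most $\btopt$ (Lemma \ref{longest-edge}); in Step 1 this path has at most two edges, so $|e|\le 2\btopt$, and for every edge of Step 2 and of the base cases except the long diagonal of Case 4 it has at most three edges, so $|e|\le 3\btopt<(\sqrt{2}+\sqrt{3})\btopt$ holds trivially. The constant is dictated solely by the Case 4 edge $(a_1,b_2)$, whose tree path $a_1,v_1,w,v_2,b_2$ has four edges, and there two specific angle facts are needed: (i) the rule of choosing between $P_1$ and $P_3$ the polygon minimizing the angle at the shared anchor $v_1$ forces $\cw(wv_1a_1)\le\pi/2$ (the two candidate angles sum to at most $\pi$ because $v_1$ is an anchor), hence $|a_1w|\le\sqrt{2}\,\btopt$; and (ii) since $v_2$ is an anchor and MST angles are at least $\pi/3$, $\cw(b_2v_2w)\le 2\pi/3$, hence $|wb_2|\le\sqrt{3}\,\btopt$; the triangle inequality then gives $|a_1b_2|\le(\sqrt{2}+\sqrt{3})\btopt$. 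Your proposal never isolates this edge nor the role of the tie-breaking choice (without it the angle at $v_1$ could approach $\pi$ and the diagonal could approach $(2+\sqrt{3})\btopt$, which exceeds the claimed bound), and the one concrete claim you do make --- that the extremal configuration is a two-edge path bent at roughly $\pi/2$ --- is inconsistent with the constant, since such a configuration yields only $\sqrt{2}\,\btopt$. As this is the step you yourself identify as the main technical obstacle, the proof of the stated bottleneck bound is missing.
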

\begin{proof}

 {\em Proof of planarity}: In each iteration, in Step 1, Step 2, and in the base cases, the edges added to $M$ are edges of $MST(P)$ or edges inside convex empty regions. By Lemma \ref{convex-disjoint-lemma} the convex empty regions in each iteration will not be intersected by the convex empty regions in the next iterations. Therefore, the edges of $M$ do not intersect each other and $M$ is plane.

{\em Proof of matching size}: In Step 1, in each iteration, all the vertices which are excluded from $T$ are matched. In Step 2, when $k=\ell=1$ we match four vertices out of five, and when $k=3, \ell =0$ we match eight vertices out of ten. In base case (a) when $t=5$ we match four vertices out of five. In base case (b) when $k=3, \ell =0$ we match eight vertices out of ten. In all other cases of Step 2 and the base cases, the ratio of matched vertices is more than $4/5$. Thus, in each iteration at least $4/5$ of the vertices removed from $T$ are matched and hence $|M_i|\ge\frac{2n_i}{5}$. Therefore, 
$$
 |M|= \sum_{i=1}^{k} |M_i| \ge \sum_{i=1}^{k} \frac{2n_i}{5} = \frac{2n}{5}.
$$

{\em Proof of edge length}: By Lemma \ref{longest-edge} the length of edges of $T$ is at most $\btopt$. Consider an edge $e\in M$ and the path $\delta$ between its end points in $T$. If $e$ is added in Step 1, then $|e|\le 2\btopt$ because $\delta$ has at most two edges. If $e$ is added in Step 2, $\delta$ has at most three edges ($|e|\le 3\btopt$) except in Case 4. In this case we look at $\delta$ in more detail. We consider the worst case when all the edges of $\delta$ have maximum possible length $\btopt$ and the angles between the edges are as big as possible; see Figure \ref{path-fig}. Consider the edge $e=(a_1,b_2)$ added to $M$ in Case 4. Since $v_2$ is an anchor and $\cw(wv_2a_2)\ge \pi/3$, the angle $\cw(b_2v_2w)\le 2\pi/3$. As our choice between $P_1$ and $P_3$ in Case 4, $\cw(w,v_1,a_1)\le\pi/2$. Recall that $e$ avoids $w$, and hence $|e|\le(\sqrt{2}+\sqrt{3})\btopt$. The analysis for the base cases is similar.

\begin{figure}[ht]
  \centering
    \includegraphics[width=0.55\textwidth]{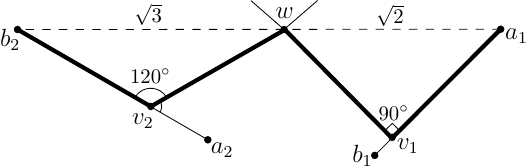}
  \caption{Path $\delta$ (in bold) with four edges of length $\btopt$ between end points of edge $(a_1,b_2)$.}
\label{path-fig}
\end{figure}

{\em Proof of complexity}: The Delaunay triangulation of $P$ can be computed in $O(n \log n)$ time. Using Kruskal's algorithm, the forest $F$ of even components can be computed in $O(n\log n)$ time. 
In Step 1 (resp. Step 2) in each iteration, we pick a leaf of $T'$ (resp. $T''$) and according to the number of leaves (resp. anchors) connected to it we add some edges to $M$. Note that in each iteration we can update $T$, $T'$ and $T''$ by only checking the two hop neighborhood of selected leaves. Since the two hop neighborhood is of constant size, we can update the trees in $O(1)$ time in each iteration. 
Thus, the total running time of Step 1, Step 2, and the base cases is $O(n)$ and the total running time of the algorithm is $O(n\log n)$.
\end{proof}

\section{Conclusion}
\label{conclusion}
We considered the NP-hard problem of computing a bottleneck plane perfect matching of a point set. Abu-Affash et al. \cite{Abu-Affash2014} presented a $2\sqrt{10}$-approximation for this problem. We used the maximum plane matching problem in unit disk graphs (UDG) as a tool for approximating a bottleneck plane perfect matching. In Section \ref{three-approximation} we presented an algorithm which computes a plane matching of size $\frac{n}{6}$ in UDG, but it is still open to show if this algorithm terminates in polynomial number of steps or not. We also presented a $\frac{2}{5}$-approximation algorithm for computing a maximum matching in UDG. By extending this algorithm we showed how one can compute a bottleneck plane matching of size $\frac{n}{5}$ with edges of optimum-length. A modification of this algorithm gives us a plane matching of size at least $\frac{2n}{5}$ with  edges of length at most $\sqrt{2}+\sqrt{3}$ times the optimum. 

\bibliographystyle{abbrv}
\bibliography{BNCM}

\begin{thebibliography}{10}

\bibitem{Abu-Affash2014}
A.~K. Abu-Affash, P.~Carmi, M.~J. Katz, and Y.~Trabelsi.
\newblock Bottleneck non-crossing matching in the plane.
\newblock {\em Comput. Geom.}, 47(3):447--457, 2014.

\bibitem{Aloupis2013}
G.~Aloupis, J.~Cardinal, S.~Collette, E.~D. Demaine, M.~L. Demaine, M.~Dulieu,
  R.~F. Monroy, V.~Hart, F.~Hurtado, S.~Langerman, M.~Saumell, C.~Seara, and
  P.~Taslakian.
\newblock Non-crossing matchings of points with geometric objects.
\newblock {\em Comput. Geom.}, 46(1):78--92, 2013.

\bibitem{Carlsson2010}
J.~Carlsson and B.~Armbruster.
\newblock A bottleneck matching problem with edge-crossing constraints.
\newblock {\em
  http://users.iems.northwestern.edu/\~{}armbruster/2010matching.pdf}.

\bibitem{Chang1992}
M.-S. Chang, C.~Y. Tang, and R.~C.~T. Lee.
\newblock Solving the {E}uclidean bottleneck matching problem by $k$-relative
  neighborhood graphs.
\newblock {\em Algorithmica}, 8(3):177--194, 1992.

\bibitem{deBerg08}
M.~de~Berg, O.~Cheong, M.~Kreveld, and M.~Overmars.
\newblock {\em Computational Geometry: Algorithms and Applications}.
\newblock Springer-Verlag, 2008.

\bibitem{Edmonds1965}
J.~Edmonds.
\newblock Paths, trees, and flowers.
\newblock {\em Canad. J. Math.}, 17:449--467, 1965.

\bibitem{Efrat2001}
A.~Efrat, A.~Itai, and M.~J. Katz.
\newblock Geometry helps in bottleneck matching and related problems.
\newblock {\em Algorithmica}, 31(1):1--28, 2001.

\bibitem{Efrat2000}
A.~Efrat and M.~J. Katz.
\newblock Computing {E}uclidean bottleneck matchings in higher dimensions.
\newblock {\em Inf. Process. Lett.}, 75(4):169--174, 2000.

\bibitem{Lengauer1990}
T.~Lengauer.
\newblock {\em Combinatorial Algorithms for Integrated Circuit Layout}.
\newblock John Wiley \& Sons, New York, 1990.

\bibitem{Monma1992}
C.~L. Monma and S.~Suri.
\newblock Transitions in geometric minimum spanning trees.
\newblock {\em Discrete {\&} Computational Geometry}, 8:265--293, 1992.

\bibitem{Vaidya1989}
P.~M. Vaidya.
\newblock Geometry helps in matching.
\newblock {\em SIAM J. Comput.}, 18(6):1201--1225, 1989.

\bibitem{Varadarajan1998}
K.~R. Varadarajan.
\newblock A divide-and-conquer algorithm for min-cost perfect matching in the
  plane.
\newblock In {\em FOCS}, pages 320--331, 1998.

\end{thebibliography}
\end{document}